\newtheorem{lemma}{Lemma}
\newtheorem{theorem}{Theorem}
\newtheorem{assumption}{Assumption}
\theoremstyle{definition} 
\newtheorem{remark}{Remark}
\newcommand{\trace}{{\mathrm{Tr}}}
\newcommand{\R}{\mathbb{R}}  
\newcommand{\C}{\mathbb{C}} 
\newcommand{\change}[1]{\textcolor{blue}{#1}}
\titlespacing*{\section}{0pt}{1.2ex plus .0ex minus .0ex}{.3ex plus .0ex}
\titlespacing*{\subsection}{0pt}{1.2ex plus .0ex minus .0ex}{.3ex plus .0ex}
\begin{document}
\title{Vertical Federated Learning over Cloud-RAN: Convergence Analysis and System Optimization}
\author{\IEEEauthorblockN{
        Yuanming Shi,~\IEEEmembership{Senior Member,~IEEE,}
        Shuhao Xia,~\IEEEmembership{Student Member,~IEEE,}
        Yong Zhou,~\IEEEmembership{Senior Member,~IEEE,}
        Yijie Mao,~\IEEEmembership{Member,~IEEE,}
        Chunxiao Jiang,~\IEEEmembership{Senior Member,~IEEE,}
        and Meixia Tao,~\IEEEmembership{Fellow,~IEEE}
        }
	\thanks{
		Y. Shi, S. Xia, Y. Zhou, and Y. Mao are with the School of Information Science and Technology, ShanghaiTech University, Shanghai 201210, China (e-mail: \{shiym, xiashh, zhouyong, maoyj\}@shanghaitech.edu.cn). \emph{(Corresponding author: Yong Zhou.)}}
	\thanks{
		C. Jiang is with the Tsinghua Space Center and the Beijing National Research Center for Information Science and Technology, Tsinghua University, Beijing 100084, China (e-mail: jchx@tsinghua.edu.cn).}
	\thanks{
		M. Tao is with Department of Electronic Engineering, Shanghai Jiao Tong University, Shanghai 201210, China (e-mail:mxtao@sjtu.edu.cn).}
 }
\maketitle

\setlength{\textfloatsep}{2pt} 
\setlength\abovedisplayskip{2pt}
\setlength\belowdisplayskip{2pt}

\vspace{-1cm}
\begin{abstract}
Vertical federated learning (FL) is a collaborative machine learning framework that enables devices to learn a global model from the feature-partition datasets without sharing local raw data. 
However, as the number of the local intermediate outputs is proportional to the training samples, it is critical to develop communication-efficient techniques for wireless vertical FL to support high-dimensional model aggregation with full device participation. 
In this paper, we propose a novel cloud radio access network (Cloud-RAN) based vertical FL system to enable fast and accurate model aggregation by leveraging over-the-air computation (AirComp) and alleviating communication straggler issue with cooperative model aggregation among geographically distributed edge servers. 
However, the model aggregation error caused by AirComp and quantization errors caused by the limited fronthaul capacity degrade the learning performance for vertical FL.  
To address these issues, we characterize the convergence behavior of the vertical FL algorithm considering both uplink and downlink transmissions.
To improve the learning performance, we establish a system optimization framework by joint transceiver and fronthaul quantization design, for which successive convex approximation and alternate convex search based system optimization algorithms are developed. 
We conduct extensive simulations to demonstrate the effectiveness of the proposed system architecture and optimization framework for vertical FL.
\end{abstract}

\vspace{-0.6cm}
\begin{IEEEkeywords}
\vspace{-0.4cm}
Vertical federated learning, cloud radio access network, over-the-air computation.
\end{IEEEkeywords}

\section{Introduction}
Federated learning (FL), as an emerging distributed learning paradigm, has recently attracted lots of attention by enabling devices to collaboratively learn a global machine learning (ML) model without local data exchanging to protect data privacy \cite{DBLP:journals/jsac/LetaiefSLL22}.
Based on the partition of datasets, FL is typically categorized into horizontal FL and vertical FL \cite{DBLP:journals/tist/YangLCT19}.
Specifically, horizontal FL is adopted in the sample-partition scenario where different devices share the same feature space but have different samples \cite{DBLP:journals/comsur/LimLHJLYNM20}, as shown in Fig. \ref{fig: hfl}.
Vertical FL, on the other hand, enables devices to collaboratively learn a global model from the feature-partition datasets that share the same sample space but different feature spaces \cite{DBLP:conf/nips/yang19}, as shown in Fig. \ref{fig: vfl}.
Vertical FL has wide applications across e-commerce, smart healthcare, and Internet of Things (IoT).
For example, in IoT networks, data collected by different types of sensors (e.g., video cameras, GPS, and inertial measurement units) need to be smartly fused for planning and decision-making \cite{DBLP:journals/iotj/LiangYL19, DBLP:journals/icl/LiuZJLXC22}.
As each participant owns partial model in vertical FL, the messages exchanged between participants are based on the intermediate outputs of the sub-model and its local data, rather than local updates as in horizontal FL.
The number of intermediate outputs is proportional to the number of training samples.
This leads to high-dimensional data transmission with a large volume of training samples \cite{DBLP:journals/corr/abs-2202-04309}.
In addition, since vertical FL needs to aggregate all local intermediate outputs to obtain the final prediction, it requires all devices to participate \change{in} the training process in vertical FL. 
Therefore, the high-dimensional data transmission with full device participation in vertical FL brings a unique challenge for communication-efficient system design.
Existing works on device selections in horizontal FL cannot address this issue in wireless vertical FL.

To design communication-efficient wireless vertical FL, it is critical to develop novel wireless techniques to transmit intermediate outputs over shared wireless medium with limited radio resources \cite{DBLP:journals/jsac/ChenGHSBFP21a}.
Conventional orthogonal multiple access schemes aim to decode individual information, which however ignores the communication task for data transmission.
Instead, over-the-air computation (AirComp) serves as a promising approach to enable efficient aggregation for FL \cite{DBLP:journals/cm/LetaiefCSZZ19, DBLP:journals/twc/YangJSD20, 9843892, yuanming2023}.
By exploiting waveform superposition of multiple access channel, AirComp can directly aggregate the local updates via concurrent uncoded transmission \cite{DBLP:journals/wc/ZhuXHC21}.
As a result, AirComp can reduce the required radio resources and the access latency compared with orthogonal multiple access schemes \cite{DBLP:journals/comsur/ShiYJZL20}.
To further improve the communication and learning efficiency of wireless FL with AirComp, the intrinsic properties of the local updates are exploited, e.g., gradient sparsity \cite{DBLP:journals/tsp/AmiriG20}, temporal correlation \cite{DBLP:journals/jsac/FanYZ21}, spatial correlation \cite{DBLP:journals/corr/abs-2112-13603}, and gradient statistics \cite{DBLP:journals/twc/ZhangT21a}.
However, due to the heterogeneity of channel conditions across different devices, AirComp based FL inevitably suffers from the communication stragglers, i.e., the devices with the weak channel conditions \cite{DBLP:journals/iotj/GuoLL21, DBLP:conf/icc/XiaZY0S021}.
Specifically, the communication straggler devices enforcing the magnitude alignment for AirComp may result in high model aggregation error, which degrades the learning performance. 
Although it can be alleviated by excluding the stragglers from the model aggregation process \cite{DBLP:journals/twc/ZhuWH20, DBLP:journals/twc/LiuYZ21}, partial device participation reduces the size of the FL training datasets, which deteriorates the performance for FL models.
Besides, these works mainly focus on horizontal FL with partial device selection and are inapplicable in the vertical FL scenario.
The reason is that vertical FL requires all the devices to participate the training process since each device owns one sub-model of the global model. 
It is thus critical to design a novel AirComp-based transmission scheme for wireless vertical FL with full device participation. 

\begin{figure}[t!]
\centering
\begin{subfigure}[h]{0.49\textwidth}
        \centering
        \includegraphics[scale=0.7]{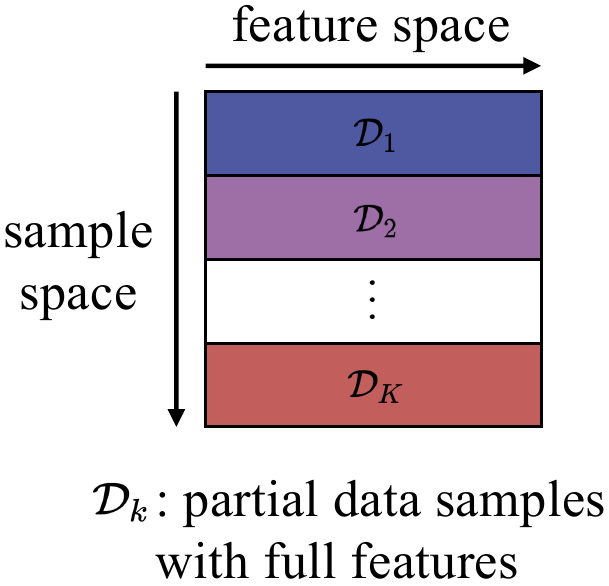}
        \caption{Horizontal FL}
        \label{fig: hfl}
\end{subfigure}
\begin{subfigure}[h]{0.49\textwidth}
        \centering
        \includegraphics[scale=0.7]{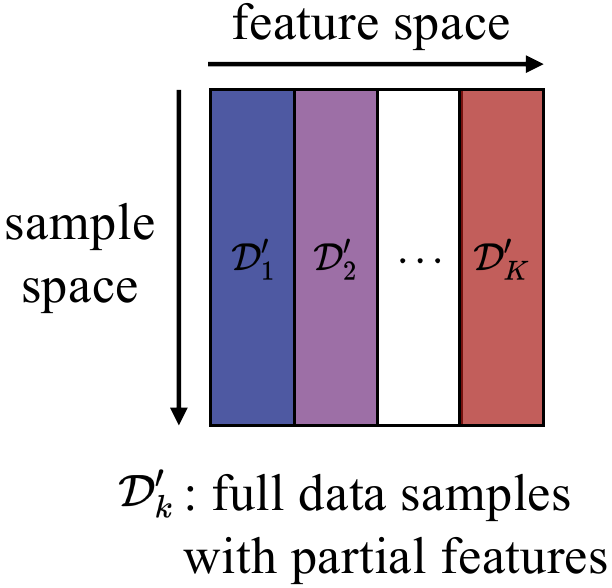}
        \caption{Vertical FL}
        \label{fig: vfl}
\end{subfigure}
\caption{Sample and feature spaces of horizontal and vertical FL.}
\label{fig: categorization}
\end{figure}

To enable all devices including the straggler devices to participate in the training process, one approach is to improve the wireless channel quality through advanced communication technologies such as unmanned aerial vehicles \cite{DBLP:journals/jsac/YangZXLSX21} and reconfigurable intelligent surface \cite{DBLP:journals/twc/WangQZSFCL22, DBLP:journals/corr/abs-2109-02353, DBLP:journals/network/YangSZYFC20}.
However, these wireless techniques may become inapplicable in the IoT scenarios with massive geographically distributed devices.
An alternative approach is to effectively extend the wireless transmission coverage by deploying multiple edge servers, thereby cooperatively assisting data exchanges between the devices and the edge servers  \cite{DBLP:journals/twc/VuNTNDM20, DBLP:journals/corr/abs-2107-09518, DBLP:journals/jsac/Wang0SZ22}.
This can shorten the communication distances between the devices and the edge servers, which enhances the transmission reliability and reduces the latency for the straggler devices.
A growing body of recent work has demonstrated the effectiveness of deploying multiple edge servers for FL, including semi-decentralized FL \cite{DBLP:journals/jsac/LinHABM21} and hierarchical FL \cite{DBLP:journals/jsac/LimNXNMK21, DBLP:journals/jsac/ZhangYWPZZS21}.
To mitigate inter-cluster interference, the authors in \cite{DBLP:journals/jsac/LinHABM21, DBLP:journals/jsac/LimNXNMK21} proposed to allocate orthogonal radio resource blocks for each device for model aggregation in FL.
However, this orthogonal transmission scheme is inefficient due to the limited spectrum resources.
In addition, many studies have considered distributed edge computing systems based on Cloud-RAN architecture with various communication metrics, such as energy consumption \cite{DBLP:journals/tvt/ChenCGLHH21} and latency \cite{DBLP:journals/tsipn/ParkJNSS21}, for resource management.

In this paper, we propose to leverage cloud radio access network (Cloud-RAN) to enable cooperative model aggregation across densely deployed multiple edge servers over large geographic areas.
This is achieved by moving baseband processing from remote radio heads (RRHs) (i.e., edge servers) to a baseband unit (BBU) pool (i.e., central server), where digital fronthaul links connect the central server and the edge servers \cite{DBLP:journals/wc/ShiZL0C15}.
Therefore, the learning performance and communication efficiency can be significantly improved through cooperative model aggregation and centralized signal processing for large-scale wireless vertical FL system.
Besides, since the baseband signal processing and model aggregation are moved to the central server in the BBU pool, edge servers only need to support basic signal transmission functionality, which further reduces their energy consumption and deployment cost \cite{DBLP:journals/wc/ShiZL0C15, DBLP:journals/twc/TaoCZ016}.

In addition, we propose a fast and reliable transmission scheme to support vertical FL with geographically distributed devices over wireless networks. 
Different from the existing works on vertical FL over wireless networks \cite{DBLP:journals/iotj/SuL21}, we shall implement AirComp-based vertical FL over Cloud-RAN to support heterogeneous devices.
However, the fronthaul links between the edge servers and the central server have limited capacity, which causes additional quantization error. 
Furthermore, the quantization error along with AirComp model aggregation error severely degrade the learning performance of vertical FL. 
The main purpose of this paper is to reveal the impact of limited fronthaul capacity and AirComp aggregation error on the learning performance and then optimize the Cloud-RAN enabled vertical FL system by jointly considering the learning performance and communication efficiency. 
The main contributions are summarized as follows:
\begin{itemize}
    \item We propose a Cloud-RAN based network architecture for communication-efficient vertical FL to alleviate the communication straggler issue by leveraging the distributed edge servers with shortened communication distance between devices and edge servers. 
    This is achieved by transmitting the intermediate outputs at each device to multiple edge servers (i.e., RRHs) via AirComp.
    The edge servers then compress the received aggregation signals and forward the quantized signals to the central server via digital fronthaul links for centralized global model aggregation in vertical FL.
	\item We characterize the convergence behaviors of the gradient-based vertical FL algorithm under AirComp model aggregation error and limited fronthaul capacity in both uplink and downlink transmission processes.
	We further propose a system optimization framework with joint transceiver and quantization design to minimize the learning optimality gap.
	The resulting system optimization problem is solved based on the successive convex approximation and alternate convex search approaches with convergence guarantees.
	\item We conduct extensive simulations to demonstrate the effectiveness of the proposed system optimization framework and system architecture. 
	We demonstrate that the proposed framework can effectively combat the effects of communication limitations on the learning performance of vertical FL in terms of the optimality gap. 
	The superiority of the distributed antenna system in Cloud-RAN for vertical FL is also demonstrated. 
\end{itemize}


The remainder of this paper is constructed as follows.
Section II introduces the learning framework of vertical FL and the communication model.
Section III provides the convergence analysis of the vertical FL algorithm, and Section IV formulates the system optimization problem for resource allocation.
Extensive numerical results of the proposed scheme are presented in Section V, followed by the conclusion in Section VI.

Throughout this paper, we denote the complex number set and the real number set by $\mathbb{C}$ and $\mathbb{R}$.
We denote scalars, vectors, and matrices by regular letters, bold small letters, and bold capital letters, respectively.
The operations of transpose and conjugate transpose are respectively denoted by the superscripts $(\cdot)^\mathsf{T}$ and  $(\cdot)^\mathsf{H}$.
$\mathbb{E}[\cdot]$ denotes the expectation operator. $\mathcal{N}(\bm{x};\bm{\mu}, \bm{\Sigma})$ and $\mathcal{CN}(\bm{x};\bm{\mu}, \bm{\Sigma})$ denote that the random variable $\bm{x}$ follows Gaussian distribution and complex Gaussian distribution with mean $\bm{\mu}$ and covariance $\bm{\Sigma}$, respectively.
We use $\bm{I}$ and $\operatorname{diag}\{\bm{x}\}$ to respectively denote the identity matrix and the diagonal matrix with diagonal entries specified by $\bm{x}$.
We denote real and imaginary components of $x$ by $\Re(x)$ and $\Im(x)$, respectively.

\section{System Model}
\subsection{Vertical Federated Learning}\label{sec: vertical fl}
\begin{figure}[h]
\centering
	\centering
	\includegraphics[scale=0.5]{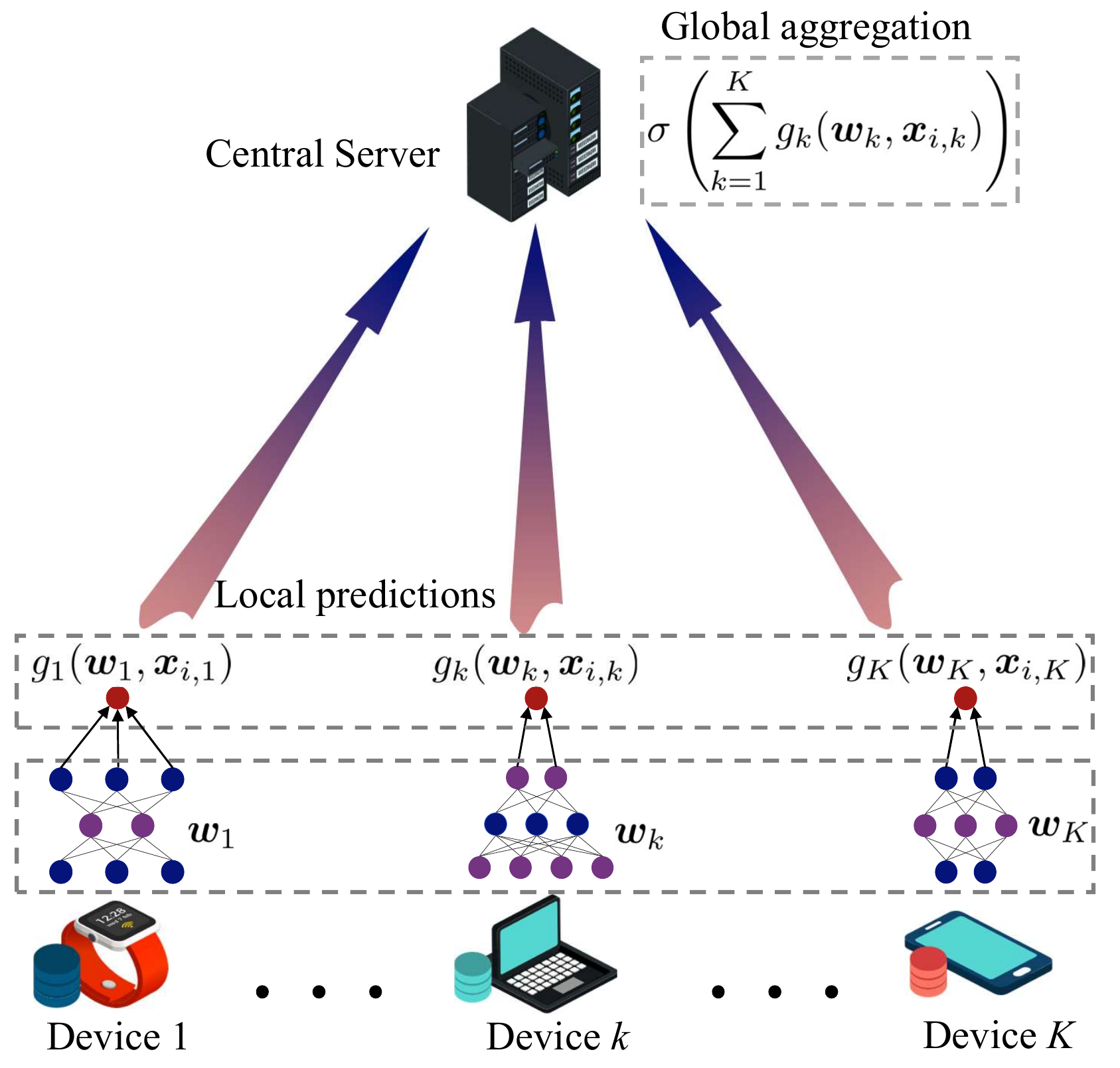}
	\vspace{-0.4cm}
	\caption{Aggregation of local prediction results for the $i$-th training sample in vertical FL.}
	\label{fig: vertical FL}
\end{figure}


Consider a vertical FL system consisting of $K$ IoT devices  and one central server, as shown in Fig. \ref{fig: vertical FL}.
Let $\mathcal{D} = \{ (\bm{x}_{i,1},\ldots,\bm{x}_{i,k}), y_i \}_{i=1}^{L}$ denote the whole training dataset of $L$ samples, where $\bm{x}_{i,k} \in \R^{d_k}$ denotes the partial  non-overlapped features with dimension $d_k$ of sample $i$ located at device $k$, and $y_i$ denotes the corresponding label only available  at the central server \cite{DBLP:journals/iotj/SuL21}.
We use the concatenated vector $\bm{x}_i = [(\bm{x}_{i,1})^\mathsf{T},\ldots,(\bm{x}_{i,k})^\mathsf{T}]^\mathsf{T} \in \R^d$ to denote the whole feature vector of sample $i$ with dimension $d = \sum_{k=1}^{K} d_k$.
The goal of vertical FL is to collaboratively learn a global model $\bm{w}$ that maps an input $\bm{x}_i$ to the corresponding prediction through a continuously differentiable function $h(\bm{w}, \bm{x}_i)$.
We assume that device $k$ maps the local feature $\bm{x}_{i,k}$ to the local prediction result $g_k(\bm{w}_k, \bm{x}_{i,k})$ using sub-model $\bm{w}_k$, where $g_k(\cdot)$ is defined as the local prediction function.
The dimension of each local prediction result is the same, which is determined by the number of classes for the classification task. 
For example, for a binary classification task, the local prediction result $g_k(\bm{w}_k, \bm{x}_{i,k})$ is a scalar.

The input of the final prediction function $h(\bm{w}, \bm{x}_i)$ is based on the aggregation of the local prediction results $\{g_k(\bm{w}_k, \bm{x}_{i,k})\}$.
We assume the final prediction is obtained by aggregating the local prediction results\footnote{For a wide range of models such as logistic classification and support vector machines
the intermediate outputs of these models are additive, i.e., the final inner product can be split into the sum of multiple inner products.},
followed by a non-linear transformation $\sigma(\cdot)$ (e.g., a softmax function) as follows:
\begin{align}\label{eq: final prediction}
	h(\bm{w}, \bm{x}_i) = \sigma\left(\sum_{k=1}^K g_k(\bm{w}_k, \bm{x}_{i,k})\right),
\end{align}
where $\bm{w}$ is the global model concatenated by the sub-models $\{\bm{w}_k\}$, i.e., $\bm{w} = \left[(\bm{w}^\mathsf{T}_1, \cdots, \bm{w}^\mathsf{T}_K)\right]^\mathsf{T}$.
Based on \eqref{eq: final prediction}, the central server can aggregate local prediction results without directly accessing local features and local model.

To learn the global model $\bm{w}$, we focus on the following empirical risk minimization problem:
\begin{align}\label{pro:p1}
\min_{  \bm{w} \in\mathbb{R}^{d}} F(\bm{w}) &= \frac{1}{L} \sum_{i=1}^{L} f_i(\bm{w}) + \lambda \sum_{k=1}^K r_k(\bm{w}),
\end{align}
where $f_i(\cdot)$ is the sample-wise loss function indicating the loss between the prediction $h(\bm{w}, \bm{x}_i)$ and the true label $y_i$ for sample $i$, $r_k(\cdot)$ is the regularization for the sub-model $\bm{w}_k$ on device $k$, and $\lambda$ is a hyperparameter.


In this paper, we consider using the full-batch gradient descent (GD) algorithm to solve \eqref{pro:p1}.
Let $\nabla F(\bm{w})$ denote the gradient of $F(\cdot)$ with respect to $\bm{w}$, then we obtain
\begin{equation}\label{eq: gradient}
	\nabla F(\bm{w}) = \frac{1}{L}\sum_{i=1}^L \begin{bmatrix}
		\nabla_1 f_i(\bm{w}) \\
		\vdots \\
		\nabla_K f_i(\bm{w})
	\end{bmatrix} + \lambda \begin{bmatrix}
		\nabla r_1(\bm{w})\\
		\vdots \\
		\nabla r_K(\bm{w})
	\end{bmatrix},
\end{equation}
where $\nabla_k f_i(\bm{w}) = \frac{\partial f_i(\bm{w})}{\partial \bm{w}_k}$ is the partial gradient of the loss function $f_i(\cdot)$ with respect to the sub-model $\bm{w}_k$.
Based on the chain rule in calculus, the partial gradient of $\nabla_k f_i(\bm{w})$ can be rewritten as
\begin{equation}\label{eq: partial gradient}
\begin{aligned}
	\nabla_k f_i(\bm{w}) &= f'_i\left( \sigma\left(\sum_{k=1}^K g_k(\bm{w}_k, \bm{x}_{i,k})\right)\right) \sigma'\left(\sum_{k=1}^K g_k(\bm{w}_k, \bm{x}_{i,k})\right)\nabla g_{k,i}(\bm{w}_k) \\
	&= G_i\left(\sum_{k=1}^K g_k(\bm{w}_k, \bm{x}_{i,k})\right) \nabla g_{k,i}(\bm{w}_k),
\end{aligned}
\end{equation}
where $f'_i(\cdot)$ and $\sigma'(\cdot)$ are the derivatives of $f_i(\cdot)$ and $\sigma(\cdot)$, $\nabla g_{k,i}(\bm{w}_k)=\frac{\partial g_k(\bm{w}_k, \bm{x}_{i,k})}{\partial \bm{w}_k}$ is the partial gradient of the local prediction function $g_k(\cdot)$ with respect to $\bm{w}_k$, and $G_i(\cdot)$ is an auxiliary function with respect to the aggregation of local prediction results for sample $i$.

In vertical FL, the features are distributed at local devices and the labels are known by the central server, so the partial gradients can be calculated separately.
Specifically, device $k$ computes and uploads the local prediction result $g_k(\bm{w}_k, \bm{x}_{i,k})$ to the central server.
Then, the central server aggregates the local prediction results to obtain $\sum_{k=1}^K g_k(\bm{w}_k, \bm{x}_{i,k})$ and broadcast $G_i\left(\sum_{k=1}^K g_k(\bm{w}_k, \bm{x}_{i,k})\right)$ back to the devices.
Consequently, the devices obtain the partial gradient $\nabla_k f_i(\bm{w})$ through Eq. \eqref{eq: partial gradient} since $\nabla g_{k,i}(\bm{w}_k)$ can be calculated locally.
Each device then updates its local model by taking one step of gradient decent with learning rate $\mu$, i.e.,
\begin{align}\label{eq: gradient descent}
	\bm{w}^{(t+1)}_k = \bm{w}^{(t)}_k - \mu\left(\frac{1}{L}\sum_{i=1}^L\nabla_k f_i(\bm{w}^{(t)}) + \lambda \nabla r_k(\bm{w}^{(t)})\right).
\end{align}

Algorithm~\ref{algorithm 1} summarizes the above procedure.
Since the central server only receives the aggregation of local prediction results, i.e., neither local features nor local models are uploaded to the central server, the privacy of local data can be guaranteed.
Nevertheless, the number of local prediction results at each device is proportional to the number of training samples, which leads to high-dimensional data transmission.
In addition, all devices are required to participate the training process in vertical FL since each device owns a sub-model $\bm{w}_k$.
Hence, it is essential to design communication efficient technologies for vertical FL in wireless networks to tame the challenge brought by the high-dimensional data transmission with full device participation.

\begin{algorithm}[t]
	\setstretch{0.95}
	\caption{Error-free gradient-based algorithm for vertical FL} 
	\label{algorithm 1}
		  \SetAlgoLined
		  \SetKwInOut{Require}{Require}
		  \SetKwInOut{Output}{Output}
		  \SetKwFor{ParFor}{}{do in parallel}{end}
		  \SetKwFor{Aggre}{}{do}{end}
		  \Require{learning rate $\mu$, maximum communication rounds $T$, initialization $\{\bm{w}_k^{(0)}\}$.}
		  \Output{$\bm{w}^{(T)} = \left[(\bm{w}_1^{(T)})^\mathsf{T}, \ldots, (\bm{w}_K^{(T)})^\mathsf{T}\right]^\mathsf{T}$.}
		  \For{$t=0,1,\ldots,T$}{
		  \ParFor{each device $k \in [K]$}{
		  \If{$t=0$}{
		  Send the local prediction results $\left\{g_k(\bm{w}^{(0)}_k, \bm{x}_{i,k})\right\}^L_{i=1}$ to the central server\;}
		  \Else{
		  Compute $\{\nabla_k f(\bm{w}^{(t)})\}_{i=1}^L$ via \eqref{eq: partial gradient}, and update the local model $\bm{w}^{(t+1)}_k$ via  \eqref{eq: gradient descent}\;
		  Send the local prediction results $\left\{g_k(\bm{w}^{(t+1)}_k, \bm{x}_{i,k})\right\}^L_{i=1}$ to the central server\;}
		  }
		  \Aggre{the central server}{Receive and aggregate the local prediction results $\left\{\sum_{k=1}^K g_k(\bm{w}^{(t)}_k, \bm{x}_{i,k})\right\}^L_{i=1}$\;
		  Compute $\left\{G_i\left(\sum_{k=1}^K g_k(\bm{w}^{(t)}_k, \bm{x}_{i,k})\right) \right\}^L_{i=1}$ and broadcast it back to all the devices\;}
		  }
\end{algorithm}

\subsection{Over-the-Air Computation based Cloud Radio Access Network}
To enable fast aggregation with full device participation in vertical FL,
we shall propose an AirComp based Cloud-RAN to support vertical FL over ultra dense wireless networks, where multiple RRHs serving as edge servers are coordinated by a centralized BBU pool serving as a central server, as shown in Fig. \ref{fig: transmission model}.
In particular, the central server coordinates $N$ edge servers with limited-capacity digital fronthaul links to serve $K$ single-antenna devices.
Each edge server is equipped with $M$ antennas.
Let $C_n$ denote the fronthaul capacity of the link between edge server $n$ and the central server.
The fronthaul capacities need to satisfy an overall capacity constraint, i.e., $\sum_{n=1}^NC_n \leq C$.
We assume perfect synchronization\footnote{
For the AirComp-based system, we can implement synchronization by sharing a reference-clock among the devices \cite{DBLP:conf/infocom/AbariRKP15}, or using the timing advance technique commonly adopted in 4G Long Term Evolution and 5G New Radio \cite{DBLP:journals/cm/MahmoodAGTS19}.} between among devices and perfect channel state information (CSI) between the edge servers to the devices is available to the central server and the devices\footnote{The RRHs estimates the channel state information (CSI) by receiving pilot signals sent from devices \cite{DBLP:journals/tcom/StephenZ18}.}.
The notations and parameters used in the system model are summarized in Table \ref{table1}.
In the following, we introduce the communication models for the uplink and downlink transmission to vertically train the FL model, respectively.

\begin{table}[]
	\centering
	\caption{\change{Important Notations}}
		\begin{tabular}{@{}ll@{}}
			\toprule
			Notation	&  Definition	\\ \midrule
			$K$			& Number of edge devices \\
			$N$ 		&  Number of edge servers \\
			$M$			&  Number of antennas for each edge server \\
			$C_n$		&  Fronthaul capacity of edge server $n$ \\ 
			$L$ & Number of time slots (samples)\\ 
			$\bm{h}^{(t)}_{\mathrm{UL},k,n}$ & Uplink channel between device $k$ and server $n$ \\
			$b^{(t)}_{\mathrm{UL},k}$ & Uplink transmit scalar for device $k$ in the $t$-th round \\
			$s^{(t)}_k(i)$	& Uplink transmit signal for device $k$ in the $t$-th round\\
			$z^{(t)}_{\mathrm{UL},n}(i)$ & Uplink channel noise for RRH $n$ in the $t$-th round\\
			$P_\mathrm{UL}$ & Maximum uplink transmit power\\
			$\bm{q}^{(t)}_{\mathrm{UL},n}(i)$ & Uplink quantization noise for RRH $n$ in the $t$-th round\\
			$\bm{Q}^{(t)}_{\mathrm{UL},n}$ & Covariance matrix of $\bm{q}^{(t)}_{\mathrm{UL},n}(i)$ \\
			$n^{(t)}_\mathrm{UL}(i)$ & Effective uplink noise in the $t$-th round\\
			$\bm{h}^{(t)}_{\mathrm{DL},k,n}$ & Downlink channel between device $k$ and server $n$ \\
			$z^{(t)}_{\mathrm{DL},n}(i)$ & Downlink channel noise for RRH $n$ in the $t$-th round\\
			$P_\mathrm{DL}$ & Maximum downlink transmit power\\
			$\bm{q}^{(t)}_{\mathrm{DL},n}(i)$ & Downlink quantization noise for RRH $n$ in the $t$-th round \\
			$\bm{Q}^{(t)}_{\mathrm{DL},n}$ & Covariance matrix of $\bm{q}^{(t)}_{\mathrm{DL},n}(i)$ \\
			$n^{(t)}_\mathrm{DL}(i)$ & Effective downlink noise in the $t$-th round\\
			\bottomrule
	\end{tabular}
	\label{table1}
\end{table}

\begin{figure}[t]
\centering
	\centering
	\includegraphics[scale=0.4]{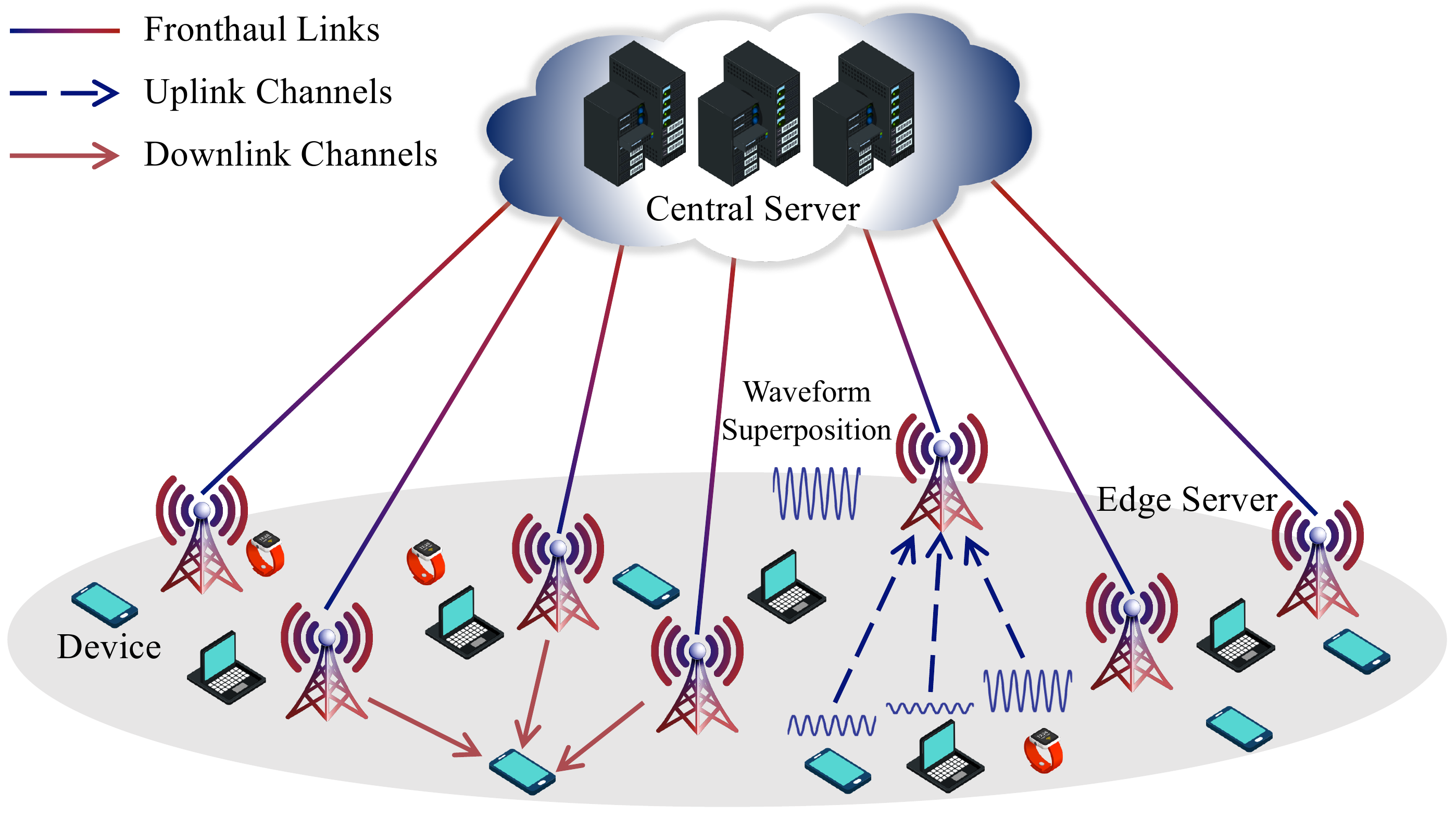}
	\caption{AirComp-based Cloud-RAN network for vertical FL.}
	\label{fig: transmission model}
\end{figure}

\subsubsection{Uplink Transmission Model} In the uplink transmission, we assume the devices communicate with the edge servers over a shared wireless multiple access channel via AirComp.
In this case, the edge servers first receive the AirComp results of the local prediction results transmitted by the devices, and then forward the aggregated results to the central server.
By exploiting the signal superposition property of a wireless multiple access channel, the central server directly processes the aggregated version of analog modulated local information simultaneously transmitted by the devices, which significantly reduces transmission latency.

The channels between the devices and the edge servers are assumed to be quasi-static flat-fading \cite{DBLP:journals/tsp/Zhou016, DBLP:journals/tsp/LiuZ15}, where the channel coefficients remain the same during one communication block. 
Each block is assumed to contain $L$ time slots, so that each device is able to transmit an $L$-dimensional signal vector encoding local prediction results of $L$ samples within one block\footnote{
For large data sample size, the local prediction results can be transmitted within multiple consecutive coherence blocks, which will be left in the future work.}.
We assume the devices complete the $t$-th communication round for model training in the $t$-th transmission block.
Then, the received signal at edge server $n$ during the $i$-th uplink transmission time slot in the $t$-th communication round is given by
\begin{align}\label{eq: MAC channel}
        \bm{y}^{(t)}_{\mathrm{UL},n}(i) = \sum_{k=1}^K\bm{h}^{(t)}_{\mathrm{UL}, k,n} b^{(t)}_{\mathrm{UL}, k}s^{(t)}_k(i)+ \bm{z}_{\mathrm{UL},n}^{(t)}(i),
\end{align}
where $\bm{h}^{(t)}_{\mathrm{UL}, k,n} \in \mathbb{C}^M$ is the uplink channel between device $k$ and edge server $n$ in the $t$-th communication round, $b^{(t)}_{\mathrm{UL}, k}$ denotes the transmit scalar,  $s^{(t)}_k(i)=g_k(\bm{w}^{(t)}_k, \bm{x}_{i,k})$ denotes the transmit signal, and $\bm{z}_{\mathrm{UL},n}^{(t)}(i) \sim \mathcal{CN}(0, \sigma^2_z)$ denotes the additive white Gaussian noise (AWGN) for edge server $n$.
Without loss of generality, we assume that the transmit signals $\{s^{(t)}_k(i)\}$ follow the standard Gaussian distribution\footnote{Each neuron in a neural network receives multiple input signals that vary randomly during both forward and backward propagation. Due to the central limit theorem, the sum of these random variables tends to be Gaussian distributed.}, i.e., $s^{(t)}_k(i) \sim \mathcal{N}(0,1)$.
Hence, the transmit power constraint of each device is written as
\begin{align}\label{eq: uplink power constraint}
        \mathbb{E}\left[|b^{(t)}_{\mathrm{UL}, k}s^{(t)}_k(i) |^2\right]  &= |b^{(t)}_{\mathrm{UL}, k}|^2\mathbb{E}[(s^{(t)}_k(i))^2] \leq \left|b^{(t)}_{\mathrm{UL}, k}\right|^2\leq P_{\mathrm{UL}},\quad \forall k \in [K],
\end{align}
where $P_{\mathrm{UL}}$ is the maximum transmit power for each device.

To forward the received signals to the central server through the limited fronthaul links, edge server $n$ quantizes the received signal $\bm{y}^{(t)}_{\mathrm{UL},n}(i)$ and then sends the quantized signals to the central server. 
In this paper, we use compression-based strategies to transmit the information between the central server and edge servers via fronthaul links.
By adopting Gaussian quantization test channel\footnote{To model the cost for transmitting high fidelity version of $\bm{y}^{(t)}_{\mathrm{UL},n}(i)$, we study a theoretical quantization model by viewing Eq. \eqref{eq: gaussian test} as a test channel based on the rate-distortion \cite{DBLP:journals/spm/ParkSSS14}.
The results can be extended to the other quantization schemes, i.e.,  uniform scalar quantization schemes.},
the quantized signal $\tilde {\bm{y}}^{(t)}_{\mathrm{UL},n}(i)$ received at the central server from edge server $n$ can be modeled as \cite{DBLP:journals/spm/ParkSSS14}
\begin{align}\label{eq: gaussian test}
	\tilde {\bm{y}}^{(t)}_{\mathrm{UL},n}(i) = \bm{y}^{(t)}_{\mathrm{UL},n}(i) + \bm{q}^{(t)}_{\mathrm{UL}, n}(i),
\end{align}
where $\bm{q}^{(t)}_{\mathrm{UL}, n}(i) \in \mathbb{C}^M \sim \mathcal{CN}\left(\bm{0}, \bm{Q}^{(t)}_{\mathrm{UL}, n}\right)$ denotes the quantization noise and $\bm{Q}^{(t)}_{\mathrm{UL}, n}$ is the diagonal covariance matrix of the quantization noise for edge server $n$.
The quantization noise level directly provides an indication of the accuracy of $\tilde {\bm{y}}^{(t)}_{\mathrm{UL},n}(i)$.
On the other hand, the level of the quantization noise indicates the amount of fronthaul capacity required for compression. 
In general, higher fronthaul capacity results in better compression resolution and less quantization noise.

Based on the rate-distortion theory, the fronthaul rates of $N$ edge servers at the $i$-th time slot in the $t$-th communication round should satisfy \cite[Ch. 3]{DBLP:books/cu/GK2011}
\begin{align}\label{eq: uplink capacity constraint}
	&\sum_{n=1}^N C^{(t)}_{\mathrm{UL}, n} = \sum_{n=1}^N I\left(\bm{y}^{(t)}_{\mathrm{UL},n}(i);\tilde {\bm{y}}^{(t)}_{\mathrm{UL},n}(i)\right) \nonumber \\
	=& \sum_{n=1}^N\log\frac{\left|\sum_{k=1}^{K} |b^{(t)}_{\mathrm{UL}, k}|^2\bm{h}^{(t)}_{\mathrm{UL}, k, n} (\bm{h}^{(t)}_{\mathrm{UL}, k, n})^\mathsf{H}+\sigma^2_z\bm{I}+\bm{Q}^{(t)}_{\mathrm{UL}, n}\right|}{\left|\bm{Q}^{(t)}_{\mathrm{UL}, n}\right|}\nonumber \\
	\leq&\log\frac{\left|P_{\mathrm{UL}}\sum_{k=1}^{K} \bm{h}^{(t)}_{\mathrm{UL}, k} (\bm{h}^{(t)}_{\mathrm{UL}, k})^\mathsf{H}+\sigma^2_z\bm{I}+\bm{Q}^{(t)}_{\mathrm{UL}}\right|}{\left|\bm{Q}^{(t)}_{\mathrm{UL}}\right|}
	\leq C,
\end{align}
where $I(X;Y)$ denotes the mutual information between input $X$ and output $Y$, 
$\bm{h}^{(t)}_{\mathrm{UL}, k}$ is the concatenated channel vector of $\{\bm{h}_{\mathrm{UL},k,1}^{(t)}\}$, and $\bm{Q}^{(t)}_\mathrm{UL}$ is the uplink covariance matrix defined by $\bm{Q}^{(t)}_\mathrm{UL}=\text{diag}\left\{\bm{Q}^{(t)}_{\mathrm{UL}, 1},\ldots,\bm{Q}^{(t)}_{\mathrm{UL}, N}\right\}$.

Stacking the quantized signals $\{\tilde{\bm{y}}^{(t)}_\mathrm{UL}(i)\}$ transmitted from all edge servers yields
\begin{align*}
	\tilde {\bm{y}}^{(t)}_{\mathrm{UL}}(i) 
	= \sum_{k=1}^K\bm{h}^{(t)}_{\mathrm{UL}, k} b^{(t)}_{\mathrm{UL}, k}s^{(t)}_k(i) + \bm{z}_{\mathrm{UL}}^{(t)}(i) + \bm{q}^{(t)}_{\mathrm{UL}}(i),
\end{align*}
where
\begin{align*}
	\bm{z}_{\mathrm{UL}}^{(t)}(i) = \left[\left(\bm{z}_{\mathrm{UL},1}^{(t)}(i)\right)^{\mathsf{H}}, \ldots, \left(\bm{z}_{\mathrm{UL},N}^{(t)}(i)\right)^{\mathsf{H}}\right]^\mathsf{H}, 
	\bm{q}^{(t)}_{\mathrm{UL}}(i) = \left[\left(\bm{q}^{(t)}_{\mathrm{UL}, 1}(i)\right)^{\mathsf{H}}, \ldots, \left(\bm{q}^{(t)}_{\mathrm{UL}, N}(i)\right)^{\mathsf{H}}\right]^\mathsf{H}.
\end{align*}
We design receive beamforming vectors at the central server for $\tilde {\bm{y}}^{(t)}_{\mathrm{UL}}(i)$, and take the real part to estimate the aggregation of local predictions $s^{(t)}(i) = \sum_{k=1}^Kg_k(\bm{x}_{i,k};\bm{w}^{(t)}_k)$, through the following procedure 
\begin{align}\label{eq: estimate}
	\hat{s}^{(t)}(i) &= \frac{1}{\sqrt{\eta^{(t)}}} \Re((\bm{m}^{(t)})^\mathsf{H} \tilde{\bm{y}}^{(t)}_\mathrm{UL}(i)) = \frac{1}{\sqrt{\eta^{(t)}}} \Re\left((\bm{m}^{(t)})^\mathsf{H} \sum_{k=1}^K\bm{h}^{(t)}_{\mathrm{UL}, k} b^{(t)}_{\mathrm{UL}, k}s^{(t)}_k(i)\right) +  n^{(t)}_\mathrm{UL}(i),
\end{align}
where $n^{(t)}_\mathrm{UL}(i) = \frac{1}{\sqrt{\eta^{(t)}}} \Re\left((\bm{m}^{(t)})^\mathsf{H}\left(\bm{z}_{\mathrm{UL}}^{(t)}(i) + \bm{q}^{(t)}_{\mathrm{UL}}\right)\right)$ is the effective uplink noise, and $\bm{m}^{(t)} = [(\bm{m}^{(t)}_1)^\mathsf{H}, \ldots, (\bm{m}^{(t)}_N)^\mathsf{H}]^\mathsf{H}\in\mathbb{C}^{NM}$ is the receive beamforming vector and $\eta^{(t)}$ is the power control factor.
Given $\bm{m}^{(t)}$ and $\eta^{(t)}$, the effective uplink noise is distributed as $n^{(t)}_\mathrm{UL}(i) \sim \mathcal{N}(0, \sigma^2_{\mathrm{UL}}(t))$ with the variance 
\begin{align}\label{eq: uplink noise}
	\sigma^2_{\mathrm{UL}}(t) = \frac{1}{2\eta^{(t)}}(\bm{m}^{(t)})^\mathsf{H}\left(\sigma^2_z\bm{I}+\bm{Q}^{(t)}_\mathrm{UL}\right)\left(\bm{m}^{(t)}\right).
\end{align}

\subsubsection{Downlink Transmission Model} After obtaining the estimate $\hat{s}^{(t)}(i)$, the central server computes $G_i\left(\hat{s}^{(t)}(i)\right)$ based on the noisy aggregation $\hat{s}^{(t)}(i)$ and then broadcasts the result back to the devices.
The central server first computes the beamformed signals to encode $G_i\left(\hat{s}^{(t)}(i)\right)$, and then send them to the edge servers over the limited fronthaul links in the $t$-th communication round.
The edge servers directly broadcast the beamformed signals to the devices over the wireless fading channels.
The resulting transmit signal at the central server for edge server $n$ is given as
$
\tilde{\bm{r}}^{(t)}_{\mathrm{DL}, n}(i) = \bm{u}^{(t)}_nG_i\left(\hat{s}^{(t)}(i)\right),
$
where $ \bm{u}^{(t)}_n  \in \C^{M}$ is the transmit beamforming vector at edge server $n$.

Similar to the uplink transmission, these signals are then compressed and sent to the edge servers over the finite-capacity fronthaul link.
The compression noise is modeled as 
\begin{align}
    \bm{r}^{(t)}_{\mathrm{DL}, n}(i) = \tilde{\bm{r}}^{(t)}_{\mathrm{DL}, n}(i) + \bm{q}^{(t)}_{\mathrm{DL}, n}(i),\quad \forall i \in [L],
\end{align} 
where $\bm{r}^{(t)}_{\mathrm{DL}, n}(i)$ is the reconstructed signal that edge server $n$ actually broadcasts to the devices, and $\bm{q}_{\mathrm{DL}, n}^{(t)}(i) \in \C^M \sim \mathcal{CN}(\bm{0}, \bm{Q}^{(t)}_{\mathrm{DL}, n})$ denotes the downlink quantization noise with the covariance matrix $\bm{Q}^{(t)}_{\mathrm{DL}, n}$.

Without loss of generality, we assume that the transmit signal follows the standard Gaussian distribution, i.e., $G_i\left(\hat{s}^{(t)}(i)\right)\sim\mathcal{N}(0,1)$, the transmit power for all edge servers should satisfy
\begin{align}\label{eq: downlink power constraint}
    \sum_{n=1}^N\mathbb{E}\left[\|\bm{r}^{(t)}_{\mathrm{DL}, n}(i)\|^2\right] &= \sum_{n=1}^N\mathbb{E}\left[\|\bm{u}^{(t)}_nG_i\left(\hat{s}^{(t)}(i)\right)\|^2\right] + \sum_{n=1}^N\trace(\bm{Q}^{(t)}_{\mathrm{DL}, n})  \nonumber \\
    &= \|\bm{u}^{(t)}\|^2 + \trace(\bm{Q}^{(t)}_{\mathrm{DL}})   \leq P_{\mathrm{DL}}, 
\end{align}
where $P_{\mathrm{DL}}$ is the maximum total transmit power for all edge servers, $\bm{u}^{(t)} = [(\bm{u}^{(t)}_1)^\mathsf{H}, \ldots, (\bm{u}^{(t)}_N)^\mathsf{H}]^\mathsf{H}$ is the concatenated transmit beamforming vector, and $\bm{Q}^{(t)}_\mathrm{DL}$ is the downlink covariance matrix defined by $\bm{Q}^{(t)}_\mathrm{DL}=\text{diag}\{\bm{Q}^{(t)}_{\mathrm{DL}, 1},\ldots,\bm{Q}^{(t)}_{\mathrm{DL}, N}\}$.
Similarly, the fronthaul capacity required for independent compression should satisfy
\begin{align}\label{eq: downlink capacity constraint}
	&\sum_{n=1}^N C^{(t)}_{\mathrm{DL}, n} = \sum_{n=1}^N I (\bm{r}^{(t)}_{\mathrm{DL}, n}(i); \tilde{\bm{r}}^{(t)}_{\mathrm{DL}, n}(i))
	= \log \frac{\left|\bm{u}^{(t)}(\bm{u}^{(t)})^\mathsf{H} + \bm{Q}^{(t)}_{\mathrm{DL}}\right|}{\left|\bm{Q}^{(t)}_{\mathrm{DL}}\right|} \leq C.
\end{align}

The received signal at device $k$ under the compression strategy can be expressed as 
\begin{align}
    &y^{(t)}_{\mathrm{DL}, k}(i) = \sum_{n=1}^N \left(\bm{h}^{(t)}_{\mathrm{DL},k,n}\right)^\mathsf{H}\bm{r}^{(t)}_{\mathrm{DL}, n}(i) + z^{(t)}_{\mathrm{DL}, k}(i) \nonumber \\\notag
    &= G_i\left(\hat{s}^{(t)}(i)\right) (\bm{h}^{(t)}_{\mathrm{DL}, k})^\mathsf{H} \bm{u}^{(t)} + (\bm{h}^{(t)}_{\mathrm{DL}, k})^\mathsf{H} \bm{q}^{(t)}_\mathrm{DL} + z^{(t)}_{\mathrm{DL}, k}(i),
\end{align}
where 
\begin{align*}
	\bm{Q}^{(t)}_{\mathrm{DL}} = \left[(\bm{q}^{(t)}_{\mathrm{DL},1})^\mathsf{H}, \ldots, (\bm{q}^{(t)}_{\mathrm{DL}, N})^\mathsf{H}\right]^\mathsf{H},
	\bm{h}^{(t)}_{\mathrm{DL},k} = \left[(\bm{h}^{(t)}_{\mathrm{DL},k,1})^\mathsf{H}, \ldots, (\bm{h}^{(t)}_{\mathrm{DL},k,N})^\mathsf{H}\right]^\mathsf{H}.
\end{align*}

After receiving signals, device $k$ estimates $G_i\left(\hat{s}^{(t)}(i)\right)$ by scaling the received signals with the receive scalar $b^{(t)}_{\mathrm{DL},k}$ and taking the real part, i.e., 
\begin{align}\label{eq: estimate G}
    \hat{G}_{i,k}\left(\hat{s}^{(t)}(i)\right)= \Re\left(b^{(t)}_{\mathrm{DL}, k} y^{(t)}_{\mathrm{DL}, k}(i)\right) = \Re\left(b^{(t)}_{\mathrm{DL}, k} G_i\left(\hat{s}^{(t)}(i)\right) (\bm{h}^{(t)}_{\mathrm{DL}, k})^\mathsf{H} \bm{u}^{(t)}\right) + n^{(t)}_{\mathrm{DL},k}(i),
\end{align}
where $\hat{G}_{i,k}\left(\hat{s}^{(t)}(i)\right)$ is the estimate of $G_i\left(\hat{s}^{(t)}(i)\right)$ at device $k$, and 
\begin{align}
	n^{(t)}_{\mathrm{DL}, k}(i) = \Re\left(b^{(t)}_{\mathrm{DL}, k}((\bm{h}^{(t)}_{\mathrm{DL}, k})^\mathsf{H} \bm{q}^{(t)}_\mathrm{DL} + z^{(t)}_{\mathrm{DL}, k}(i))\right)
\end{align}
denotes the effective downlink noise for device $k$ distributed as $n^{(t)}_{\mathrm{DL}, k}(i) \sim \mathcal{N}(0, \sigma^2_{\mathrm{DL},k}(t))$.
The variance is given by
\begin{align}\label{eq: downlink noise}
    \sigma^2_{\mathrm{DL},k}(t) = \frac{|b^{(t)}_{\mathrm{DL}, k}|^2}{2}\left(\sigma_{z}^2+(\bm{h}^{(t)}_{\mathrm{DL}, k})^\mathsf{H} \bm{Q}^{(t)}_\mathrm{DL} \bm{h}^{(t)}_{\mathrm{DL}, k}\right).
\end{align}

The proposed framework performs centralized signal processing via AirComp by exploiting the cooperation among edge servers, thereby achieving reliable and low-latency model aggregation.
Note that we assume all the RRHs are active during FL training, since vertical FL require all the devices to participate in FL training.
A key challenge of the Cloud-RAN system is that the quantization errors caused by the limited fronthaul capacity and the aggregation errors caused by AirComp will severely degrade the learning performance of vertical FL. 
In this paper, our goal is to understand the impact of limited fronthaul capacity and AirComp aggregation error on the learning performance for vertical FL, followed by designing efficient Cloud-RAN system optimization framework to improve the vertical FL performance.
To this end, we analyze the convergence of Algorithm \ref{algorithm 1} under the AirComp based Cloud-RAN, and characterize the convergence with respect to the transmission parameters (e.g., transmit/receive beamforming vectors and quantization covariance matrices) in the next section.

\section{Convergence Analysis}
In this section, we characterize the convergence behavior of Algorithm \ref{algorithm 1} under the AirComp based Cloud-RAN with limited fronthaul capacity.
Specifically, we first derive the closed-form expression of the vertical FL convergence in terms of optimality gap between objective values and the optimal value $F(\bm{w}^*)$, i.e., $\mathbb{E}[F(\bm{w}^{(T)})] - F(\bm{w}^*)$, where the expectation is taken over the communication noise.

\subsection{Zero-forcing Precoding}
In this subsection, we first design the transmit and receiver scalars of devices to compensate the channel fading based on perfect CSI. 
To achieve channel inversion, the transmit scalar $b^{(t)}_{\mathrm{UL}, k}$ at device $k$ is designed as
	\begin{align}\label{eq: transmit scalar}
		b^{(t)}_{\mathrm{UL}, k} = \frac{\sqrt{\eta^{(t)}}((\bm{m}^{(t)})^\mathsf{H}\bm{h}^{(t)}_{\mathrm{UL},k})^\mathsf{H}}{|(\bm{m}^{(t)})^\mathsf{H}\bm{h}^{(t)}_{\mathrm{UL},k}|^2}, \quad \forall k \in [K].
	\end{align}
Here, we assume that $\bm{m}^{(t)}$ and $\eta^{(t)}$ are known at devices through the feedback process before transmission.
As a result, the received signal at the central server becomes the desired aggregation of the local prediction results with noises.
Recall \eqref{eq: estimate}, the estimate $\hat{s}^{(t)}(i)$ at the central server can be written as 
\begin{align}\label{eq: estimate s}
	\hat{s}^{(t)}(i) = \sum_{k=1}^Kg_k(\bm{w}^{(t)}_k, \bm{x}_{i,k}) +n^{(t)}_\mathrm{UL}(i) = s^{(t)}(i)  + n^{(t)}_\mathrm{UL}(i),
\end{align}
for all $i \in [L]$.
Since the uplink noise $n^{(t)}_{\mathrm{UL}}(i)$ is zero-mean and independent to $s^{(t)}(i)$, the estimate $\hat{s}^{(t)}(i)$ is an unbiased estimation of $s^{(t)}(i)$, i.e., $\mathbb{E}[\hat{s}^{(t)}(i)] = s^{(t)}(i)$, with the expectation  taken over the uplink noise.
Considering the transmit power constraint of each device, i.e., $\|b^{(t)}_{\mathrm{UL},k}\|^2 \leq P_{\mathrm{UL}}$, the power control factor is given by $\eta^{(t)} = P_{\mathrm{UL}}\min_k \left|(\bm{m}^{(t)})^\mathsf{H}\bm{h}^{(t)}_{\mathrm{UL},k}\right|^2$.

Similarly, to perfectly compensate the channel fading of downlink transmission, the receive scalar at device $k$ can be designed as 
\begin{align}\label{eq: receiver design}
    b^{(t)}_{\mathrm{DL}, k} = \frac{((\bm{h}^{(t)}_{\mathrm{DL}, k})^\mathsf{H} \bm{u}^{(t)})^\mathsf{H}}{\left|(\bm{h}^{(t)}_{\mathrm{DL}, k})^\mathsf{H} \bm{u}^{(t)}\right|^2},\quad \forall k \in [K].
\end{align}
According to \eqref{eq: estimate G}, the estimate $\hat{G}_{i,k}\left(\hat{s}^{(t)}(i)\right)$ at device $k$ can be rewritten as
\begin{align}\label{eq: estimate G1}
    \hat{G}_{i,k}\left(\hat{s}^{(t)}(i)\right) = G_i\left(\hat{s}^{(t)}(i)\right) + n^{(t)}_{\mathrm{DL},k}(i) = G_i\left(s^{(t)}(i) + n^{(t)}_\mathrm{UL}(i)\right) + n^{(t)}_{\mathrm{DL},k}(i).
\end{align}
Different from \eqref{eq: estimate s}, the uplink noise is embedded in function $G(\cdot)$.
In order to characterize the impact of the effective noise of uplink and downlink, we have the following lemma:
\begin{lemma}\label{lemma}
	Suppose that the amplitude of uplink noise is small enough, the estimate $\hat{G}_i\left(\hat{s}^{(t)}(i)\right)$ becomes an unbiased estimation of $G_i\left(s^{(t)}(i)\right)$.
\end{lemma}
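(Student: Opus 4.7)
The plan is to start from the expression for $\hat{G}_{i,k}\bigl(\hat{s}^{(t)}(i)\bigr)$ given in \eqref{eq: estimate G1}, apply a Taylor expansion of $G_i(\cdot)$ around the noiseless aggregation $s^{(t)}(i)$, and then take expectation with respect to the uplink and downlink noises to kill the first-order term (because $n^{(t)}_{\mathrm{UL}}(i)$ is zero-mean Gaussian and independent of $s^{(t)}(i)$) and drop the remaining higher-order terms under the small-amplitude assumption.

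\textbf{Key steps.} First, I would expand
\begin{align*}
G_i\!\left(s^{(t)}(i)+n^{(t)}_{\mathrm{UL}}(i)\right)
= G_i\!\left(s^{(t)}(i)\right) + G_i'\!\left(s^{(t)}(i)\right) n^{(t)}_{\mathrm{UL}}(i) + \tfrac{1}{2} G_i''\!\left(s^{(t)}(i)\right)\bigl(n^{(t)}_{\mathrm{UL}}(i)\bigr)^2 + R_i,
\end{align*}
where $R_i$ collects the cubic-and-higher-order terms in $n^{(t)}_{\mathrm{UL}}(i)$; the existence of these derivatives is inherited from the differentiability of $f_i$, $\sigma$, and $g_k$ assumed implicitly in Section~II. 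Second, I would substitute this into \eqref{eq: estimate G1} and take expectation over the independent zero-mean Gaussian noises $n^{(t)}_{\mathrm{UL}}(i)\sim\mathcal{N}(0,\sigma^2_{\mathrm{UL}}(t))$ and $n^{(t)}_{\mathrm{DL},k}(i)\sim\mathcal{N}(0,\sigma^2_{\mathrm{DL},k}(t))$. The downlink term vanishes immediately, the linear uplink term vanishes because $\mathbb{E}[n^{(t)}_{\mathrm{UL}}(i)]=0$, and only the second-order term (proportional to $\sigma^2_{\mathrm{UL}}(t)$) plus the remainder $\mathbb{E}[R_i]$ survive. Third, I would invoke the ``small uplink noise amplitude'' hypothesis to argue that both $\tfrac{1}{2} G_i''(s^{(t)}(i))\sigma^2_{\mathrm{UL}}(t)$ and $\mathbb{E}[R_i]$ are negligible, so that to leading order
\begin{align*}
\mathbb{E}\!\left[\hat{G}_{i,k}\!\left(\hat{s}^{(t)}(i)\right)\right] = G_i\!\left(s^{(t)}(i)\right),
\end{align*}
which is the claimed unbiasedness.

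\textbf{Expected obstacle.} The routine computation is trivial; the delicate part is justifying that truncating the Taylor expansion after the first-order term is legitimate. Technically one needs either a uniform bound on $|G_i''|$ in a neighborhood of $s^{(t)}(i)$ (so that the bias shrinks as $O(\sigma^2_{\mathrm{UL}}(t))$), or to interpret ``amplitude small enough'' in the sense $\sigma^2_{\mathrm{UL}}(t)\to 0$ and absorb the curvature term into an $o(1)$ approximation. I would simply state the local boundedness of $G_i''$ as a mild regularity condition on the loss and the non-linear transformation $\sigma(\cdot)$, so that the first-order approximation becomes exact in the stated limit. This is consistent with the later convergence analysis in Section~III, which only needs unbiasedness up to lower-order terms absorbed into the optimality gap.
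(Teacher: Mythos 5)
Your proposal follows essentially the same route as the paper: a Taylor expansion of $G_i(\cdot)$ around $s^{(t)}(i)$, neglect of the higher-order terms under the small-uplink-noise assumption, and unbiasedness from the zero-mean property of $n^{(t)}_{\mathrm{UL}}(i)$ and $n^{(t)}_{\mathrm{DL},k}(i)$. Your treatment is in fact slightly more careful than the paper's, since you make explicit that the residual bias is of order $\tfrac{1}{2}G_i''(s^{(t)}(i))\,\sigma^2_{\mathrm{UL}}(t)$ and identify the local boundedness of $G_i''$ as the regularity condition needed to justify the truncation, whereas the paper simply writes the remainder as $\mathcal{O}(|n^{(t)}_{\mathrm{UL}}(i)|^2)$ and drops it.
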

To verify Lemma 1, we can write the estimate $\hat{G}_{i,k}\left(\hat{s}^{(t)}(i)\right)$ as its first-order Taylor expansion as follows
\begin{align}\label{eq: taylor}
	\hat{G}_{i,k}\left(\hat{s}^{(t)}(i)\right) = & G_i(s^{(t)}(i) + n^{(t)}_{\mathrm{UL}}) + n^{(t)}_{\mathrm{DL},k}(i) \nonumber \\
	=& G_i\left(s^{(t)}(i)\right) + G'_i(s^{(t)}(i))n^{(t)}_{\mathrm{UL}}(i) + \mathcal{O}(|n^{(t)}_{\mathrm{UL}}(i)|^2) + n^{(t)}_{\mathrm{DL},k}(i) \nonumber \\
	\approx& G_i(s^{(t)}(i)) + G'_i(s^{(t)}(i))n^{(t)}_{\mathrm{UL}}(i) + n^{(t)}_{\mathrm{DL},k}(i),
\end{align}
where $G_i'(\cdot)$ is the first derivative of $G_i(\cdot)$.
Assuming that the amplitude of uplink noise is small, the term $\mathcal{O}(|n^{(t)}_{\mathrm{UL}}(i)|^2)$ is neglected, which implies the last approximated equality in \eqref{eq: taylor} \cite{DBLP:journals/neco/Bishop95}.
Actually, we can perform beamforming design and capacity allocation to control the amplitude of the effective noise of uplink and downlink, which is introduced in Section IV.


\begin{remark}
Through zero-forcing precoding, we can perfectly compensate the channel fading and thus minimize the impact caused by the channel distortion, which gives the closed-form solution with respect to $\eta^{(t)}$.
As a sequence, the communication noises $n^{(t)}_\mathrm{UL}(i)$ and $n^{(t)}_\mathrm{DL}(i)$ can equivalently represent the impact caused by uplink and downlink communications, i.e., \eqref{eq: estimate s} and \eqref{eq: estimate G1}, respectively. 
As discussed later in this section, the communication noises significantly degrade the convergence performance of the vertical FL algorithm in terms of the optimality gap over the AirComp-based Cloud-RAN.
It is noteworthy that a similar type of aggregation noise was also considered in \cite{DBLP:journals/tcom/AngCZCWY20}. 
However, they addressed the aggregation noise from the algorithmic perspective, while this paper aims to reduce the noise impact through the design of the communication system. 
In addition, the effect of aggregation noise on vertical FL algorithms is different from that on horizontal FL algorithms.
\end{remark}

\subsection{Convergence Result}
In the following, we first provide the convergence analysis of Algorithm \ref{algorithm 1} under AirComp based Cloud-RAN with a non-zero optimality gap from the optimal solution.
We first present some assumptions which are widely adopted in FL literature \cite{DBLP:conf/iclr/LiHYWZ20, DBLP:journals/twc/ChenYSYPC21}.
\begin{assumption}[$\alpha$-strongly convexity]\label{ass:1}
The function $F(\cdot)$ is assumed to be $\alpha$-strongly convex on $\R^d$ with constant $\alpha$, namely, satisfying the following inequality
\begin{align}\label{eq: convex}
F(\bm{y}) \geq F(\bm{x}) + \nabla F(\bm{x})^{\mathsf{T}} (\bm{y} - \bm{x}) + \frac{\alpha}{2} \Vert \bm{y} - \bm{x} \Vert^2,
\end{align}
for all $\bm{x}, \bm{y} \in  \R^d$.
\end{assumption}
\begin{assumption}[$\beta$-smoothness]\label{ass:2}
The function $F(\cdot)$ is assumed to be $\beta$-smooth on $\R^d$ with constant $\beta$, namely, satisfying the following inequality
\begin{align}
F(\bm{y}) \leq F(\bm{x}) + \nabla F(\bm{x})^{\mathsf{T}} (\bm{y} - \bm{x}) + \frac{\beta}{2} \Vert \bm{y} - \bm{x} \Vert^2,
\end{align}
for all $\bm{x}, \bm{y} \in  \R^d$.
\end{assumption}

Based on Assumptions \ref{ass:1}-\ref{ass:2}, we analyze the performance of Algorithm \ref{algorithm 1} under the AirComp based Cloud-RAN communication system proposed in the previous section.
Specifically, the optimality gap from the optimal objective value $F(\bm{w}^*)$  with respect to the system parameters $ \{\bm{m}^{(t)}, \bm{u}^{(t)}, \bm{Q}^{(t)}_{\mathrm{UL}},  \bm{Q}^{(t)}_{\mathrm{DL}} \}$ is established.
The details are summarized in the following theorem.

\begin{theorem}[Convergence of Algorithm \ref{algorithm 1} under AirComp based Cloud-RAN]\label{the1}
Consider Assumptions \ref{ass:1}-\ref{ass:2} with constant learning rate $\mu^{(t)} = 1/\beta$, and also assuming that estimate $\hat{G}_i\left(\hat{s}^{(t)}(i)\right)$ is an unbiased estimation of $G_i\left(s^{(t)}(i)\right)$ according to Lemma \ref{lemma}.
The expected optimality gap for Algorithm \ref{algorithm 1} under AirComp based Cloud-RAN is upper bounded as
\begin{align}
\mathbb{E}\left[F(\bm{w}^{(T)}) - F(\bm{w}^*)\right] &\leq \rho^T\mathbb{E}\left[F(\bm{w}^{(0)}) - F(\bm{w}^*)\right] + \frac{3}{2L^2\beta} B(T),
\end{align}
where $\rho = 1 - \frac{\alpha}{\beta}$ is the contraction rate and the optimality gap $B(T)$ in the $T$-th communication round is given by
\begin{align}\label{eq: optimality gap}
	B(T) = \sum_{t=0}^{T-1}\rho^{T-t-1}\sum_{k=1}^K(\Phi_{1,k}\sigma^2_{\mathrm{UL}}(t) + \Phi_{2,k}\sigma^2_{\mathrm{DL},k}(t)),
\end{align}
where 
\begin{subequations}
\begin{align}
	\Phi_{1,k}(t) = \sum_{i=1}^LG'_i\left(s^{(t)}(i)\right)^2\left\|\bm{x}_{i,k}\right\|^2,	\Phi_{2,k}(t) = \sum_{i=1}^L \left\|\bm{x}_{i,k}\right\|^2. \label{eq: phi}
\end{align}
\end{subequations}
\end{theorem}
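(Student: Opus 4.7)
The plan is to follow the standard recipe for noisy full-batch gradient descent under $\alpha$-strong convexity and $\beta$-smoothness, with the central task being a careful bookkeeping of how the uplink and downlink noise propagate through the vertical-FL chain rule into the effective stochastic gradient used by \eqref{eq: gradient descent}.

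\textbf{Step 1 (identify the noisy gradient).} Combining the zero-forced uplink/downlink estimates \eqref{eq: estimate s}, \eqref{eq: estimate G1} with the first-order Taylor linearization of Lemma~\ref{lemma} (see \eqref{eq: taylor}), I would write what device $k$ actually uses as its partial gradient in round $t$ as
\begin{align*}
\hat{\nabla}_k f_i(\bm{w}^{(t)}) = \bigl(G_i(s^{(t)}(i)) + G'_i(s^{(t)}(i))\,n^{(t)}_{\mathrm{UL}}(i) + n^{(t)}_{\mathrm{DL},k}(i)\bigr)\,\nabla g_{k,i}(\bm{w}^{(t)}_k).
\end{align*}
This decomposes the overall gradient as $\hat{\nabla} F(\bm{w}^{(t)}) = \nabla F(\bm{w}^{(t)}) + \bm{e}^{(t)}$, where each device-block $\bm{e}^{(t)}_k = \frac{1}{L}\sum_{i=1}^L\bigl(G'_i(s^{(t)}(i)) n^{(t)}_{\mathrm{UL}}(i) + n^{(t)}_{\mathrm{DL},k}(i)\bigr)\nabla g_{k,i}(\bm{w}^{(t)}_k)$ is zero-mean by Lemma~\ref{lemma}, and, for the additive prediction models of \eqref{eq: final prediction} (e.g.\ logistic regression, SVM), $\|\nabla g_{k,i}(\bm{w}^{(t)}_k)\|^2 = \|\bm{x}_{i,k}\|^2$.

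\textbf{Step 2 (descent inequality).} I would plug $\bm{w}^{(t+1)} - \bm{w}^{(t)} = -(1/\beta)\hat{\nabla} F(\bm{w}^{(t)})$ into the $\beta$-smoothness upper bound (Assumption~\ref{ass:2}) and take expectation conditional on $\bm{w}^{(t)}$. Unbiasedness kills the cross term $\mathbb{E}[\nabla F(\bm{w}^{(t)})^\top \bm{e}^{(t)}]$, producing
\begin{align*}
\mathbb{E}[F(\bm{w}^{(t+1)})] - F(\bm{w}^{(t)}) \leq -\tfrac{1}{2\beta}\|\nabla F(\bm{w}^{(t)})\|^2 + \tfrac{c}{2\beta}\,\mathbb{E}\|\bm{e}^{(t)}\|^2,
\end{align*}
where the absolute constant $c$ comes from a Young-type splitting of the quadratic term $\|\hat{\nabla} F\|^2 = \|\nabla F + \bm{e}^{(t)}\|^2$; this is where the numerical prefactor $3$ in the theorem statement will be produced.

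\textbf{Step 3 (variance bookkeeping).} Independence of $\{n^{(t)}_{\mathrm{UL}}(i)\}_i$ and $\{n^{(t)}_{\mathrm{DL},k}(i)\}_i$ across sample-slots $i$, together with their mutual independence, makes all $i \neq j$ cross terms vanish. Summing block norms $\mathbb{E}\|\bm{e}^{(t)}\|^2 = \sum_{k=1}^K \mathbb{E}\|\bm{e}^{(t)}_k\|^2$ then yields
\begin{align*}
\mathbb{E}\|\bm{e}^{(t)}\|^2 = \frac{1}{L^2}\sum_{k=1}^K\bigl[\Phi_{1,k}(t)\,\sigma^2_{\mathrm{UL}}(t) + \Phi_{2,k}(t)\,\sigma^2_{\mathrm{DL},k}(t)\bigr],
\end{align*}
with $\Phi_{1,k},\Phi_{2,k}$ exactly as in \eqref{eq: phi}. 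This supplies the $1/L^2$ averaging and the $\|\bm{x}_{i,k}\|^2$ weighting in the final bound.

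\textbf{Step 4 (strong convexity and unrolling).} Invoking the PL-type inequality $\|\nabla F(\bm{w}^{(t)})\|^2 \geq 2\alpha(F(\bm{w}^{(t)}) - F(\bm{w}^*))$, which follows from Assumption~\ref{ass:1}, converts Step 2 into the scalar one-step recursion
\begin{align*}
\mathbb{E}[F(\bm{w}^{(t+1)}) - F(\bm{w}^*)] \leq \rho\,\mathbb{E}[F(\bm{w}^{(t)}) - F(\bm{w}^*)] + \tfrac{3}{2L^2\beta}\sum_{k=1}^K\bigl(\Phi_{1,k}(t)\sigma^2_{\mathrm{UL}}(t) + \Phi_{2,k}(t)\sigma^2_{\mathrm{DL},k}(t)\bigr),
\end{align*}
with $\rho = 1 - \alpha/\beta \in (0,1)$. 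Taking total expectation and unrolling from $t=0$ to $T-1$ produces a $\rho^T$-decayed initial suboptimality plus the weighted geometric sum $B(T) = \sum_{t=0}^{T-1}\rho^{T-1-t}(\cdot)$, which is exactly \eqref{eq: optimality gap}.

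\textbf{Main obstacle.} The delicate part is the noise bookkeeping in Steps 1 and 3. The uplink noise $n^{(t)}_{\mathrm{UL}}(i)$ enters multiplicatively through $G'_i(s^{(t)}(i))$ and, crucially, is \emph{shared} across all $K$ device-blocks in round $t$, while the downlink noises $n^{(t)}_{\mathrm{DL},k}(i)$ are per-device. One must carefully argue (i) that the second-order Taylor remainder $\mathcal{O}(|n^{(t)}_{\mathrm{UL}}(i)|^2)$ suppressed by Lemma~\ref{lemma} is small enough that the cross term in Step 2 really vanishes in expectation, and (ii) that the shared uplink noise does not inflate $\mathbb{E}\|\bm{e}^{(t)}\|^2$ beyond the within-block variances captured by $\Phi_{1,k}$ (it does not, because $\|\bm{e}^{(t)}\|^2 = \sum_k\|\bm{e}^{(t)}_k\|^2$ is a block-diagonal quadratic form that sees no cross-$k$ covariance). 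The remainder is routine geometric-series accounting for noisy strongly convex GD.
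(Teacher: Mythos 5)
Your proposal is correct and follows essentially the same route as the paper's Appendix~\ref{sec: proof of theorem 1}: decompose the noisy gradient as $\hat{\nabla}F(\bm{w}^{(t)}) = \nabla F(\bm{w}^{(t)}) + \bm{e}^{(t)}$, bound $\mathbb{E}\|\bm{e}^{(t)}_k\|^2$ blockwise using independence across sample slots to obtain the $\Phi_{1,k},\Phi_{2,k}$ terms with the $1/L^2$ factor, apply the $\beta$-smoothness descent step with $\mu^{(t)}=1/\beta$, invoke the PL consequence of strong convexity, and unroll geometrically. One small correction to your Step 2: with $\mu^{(t)}=1/\beta$ the cross term $\nabla F^{\mathsf{T}}\bm{e}^{(t)}$ cancels \emph{identically} (deterministically) between the linear and quadratic parts of the smoothness bound, so no Young-type splitting is needed and the noise coefficient comes out as exactly $\tfrac{1}{2\beta}$; indeed the paper's own appendix derives $\tfrac{1}{2L^2\beta}B(T)$, and the prefactor $3$ in the theorem statement is not produced by either your argument or the paper's.
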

\begin{proof}
	Please refer to Appendix \ref{sec: proof of theorem 1}.
\end{proof}

\begin{remark}
Comparing Theorem \ref{the1} to the convergence results of the vanilla GD algorithm \cite{DBLP:journals/siamsc/FriedlanderS12} with error-free communication, we observe that Algorithm \ref{algorithm 1} under AirComp based Cloud-RAN achieves the same convergence rate, but with a non-zero optimality gap $B(T)$ depending on the communication noise variances $\{\sigma^2_{\mathrm{UL}}(t)\}^T_{t=0}$ and $\{\sigma^2_{\mathrm{DL},t}\}^T_{t=0}$.
Hence, the non-zero gap $B(T)$ reveals the impact of the communication noises on the convergence performance.
In particular, a smaller $B(T)$ leads to a smaller gap between $F(\bm{w}^{(T)})$ and $F(\bm{w}^*)$, which motivates us to treat the optimality gap $B(T)$ as the metric of vertical FL performance over the Cloud-RAN.
Based on \eqref{eq: uplink noise} and \eqref{eq: downlink noise}, we can control the noise variances by jointly designing the Cloud-RAN system parameters $\{\bm{m}^{(t)}, \bm{u}^{(t)}, \bm{Q}^{(t)}_{\mathrm{UL}}, \bm{Q}^{(t)}_{\mathrm{DL}}\}$.
\end{remark}

\section{System Optimization}\label{sec: system optimization}
This section presents the system optimization for vertical FL over the AirComp-based Cloud-RAN.
We first formulate the resource allocation problem by minimizing the optimality gap under the system constraints including the limited fronthaul capacity and the transmit power constraints.
Then, we decompose the resulting optimization problem into two sub-problems, i.e., uplink and downlink optimization problems which are solved by the successive convex approximation (SCA) and alternate convex search (ACS) approaches.

\subsection{Problem Formulation}
In this section, we optimize the Cloud-RAN assisted vertical FL system by minimizing the global loss $F(\bm{w})$ based on the convergence results in the previous section.
Specifically, we minimize the optimality gap in terms of $B(T)$ in Theorem \ref{the1} while satisfying the communication constraints (e.g., \eqref{eq: uplink power constraint}, \eqref{eq: uplink capacity constraint}, \eqref{eq: downlink power constraint}, \eqref{eq: downlink capacity constraint}).
Then, the corresponding optimization problem can be formulated as
\begin{equation}\label{prob: original problem1}
\begin{array}{cl}
\underset{\{\bm{m}^{(t)}, \bm{u}^{(t)}, \bm{Q}^{(t)}_{\mathrm{UL}},  \bm{Q}^{(t)}_{\mathrm{DL}} \}}{\operatorname{min}} &
\sum_{t=0}^{T-1}\rho^{T-t-1}\sum_{k=1}^K(\Phi_{1,k}\sigma^2_{\mathrm{UL}}(t) + \Phi_{2,k}\sigma^2_{\mathrm{DL},k}(t)) \\
\mathrm{s.t.} & \sum_{n=1}^N C^{(t)}_{\mathrm{UL}, n} \leq C, \sum_{n=1}^N C^{(t)}_{\mathrm{DL}, n} \leq C, \forall t\in[T] \\
& \bm{Q}^{(t)}_{\mathrm{UL}}(i,i) > 0, \bm{Q}^{(t)}_{\mathrm{UL}}(i,j) = 0, \forall i \neq j, \forall t\in[T]\\
& \bm{Q}^{(t)}_{\mathrm{DL}}(i,i) > 0, \bm{Q}^{(t)}_{\mathrm{DL}}(i,j) = 0, \forall i \neq j, \forall t\in[T]\\
& \|\bm{u}^{(t)}\|^2+ \trace(\bm{Q}^{(t)}_{\mathrm{DL}}) \leq P_{\mathrm{DL}}, \forall t\in[T].
\end{array}
\end{equation}
Since the communication constraints are independent during $T$ rounds, we can decompose the above problem into $T$ independent sub-problems.
Additionally, note that both the objective function and the constraints can be divided into uplink and downlink transmissions, which are independent of each other.
As a consequence, we solve problem \eqref{prob: original problem1} by solving the following two sub-problems in parallel at each communication round,
\begin{equation}\label{prob: uplink optimization}
\begin{array}{cl}
\underset{\bm{m}^{(t)}, \bm{Q}^{(t)}_{\mathrm{UL}} }{\operatorname{min}}
& \sum_{k=1}^K\Phi_{1,k}(t)\sigma^2_{\mathrm{UL}}(t)\\
\mathrm{s.t.}
& \sum_{n=1}^N C^{(t)}_{\mathrm{UL}, n} \leq C, \\
& \bm{Q}^{(t)}_{\mathrm{UL}}(i,i) > 0, \bm{Q}^{(t)}_{\mathrm{UL}}(i,j) = 0, \forall i \neq j,
\end{array}
\end{equation}
and 
\begin{equation}\label{prob: downlink optimization}
\begin{array}{cl}
\underset{ \bm{u}^{(t)}, \bm{Q}^{(t)}_{\mathrm{DL}}}{\operatorname{min}}& \sum_{k=1}^K \Phi_{2,k}(t)\sigma^2_{\mathrm{DL},k}(t) \\
\mathrm{s.t.}
& \sum_{n=1}^N C^{(t)}_{\mathrm{DL}, n}  \leq C, \\
& \bm{Q}^{(t)}_{\mathrm{DL}}(i,i) > 0, \bm{Q}^{(t)}_{\mathrm{DL}}(i,j) = 0, \forall i \neq j,\\
& \|\bm{u}^{(t)}\|^2+ \trace(\bm{Q}^{(t)}_{\mathrm{DL}}) \leq P_{\mathrm{DL}}.
\end{array}
\end{equation}

\subsection{Optimization Framework}
In the following, we specify the optimization framework for solving the uplink and downlink optimization problems, respectively.
To simplify the notations, we omit the time index of the optimized variables in the following as we focus on the system optimization problem at one communication round.
\subsubsection{Uplink Optimization}
By substituting \eqref{eq: uplink capacity constraint} and \eqref{eq: uplink noise} into problem \eqref{prob: uplink optimization}, we then obtain the following optimization problem in the uplink transmission:
\begin{equation}\label{prob: uplink optimization1}
	\begin{aligned}
\min_{ \bm{m}, \bm{Q}_{\mathrm{UL}}}&\quad \frac{1}{2\eta}\bm{m}^\mathsf{H}\left(\sigma^2_z\bm{I}+\bm{Q}_\mathrm{UL}\right)\bm{m} \\
\text{s.t.}\;\;
&\log\frac{\left|P_{\mathrm{UL}}\sum_{k=1}^{K} \bm{h}_{\mathrm{UL}, k} (\bm{h}_{\mathrm{UL}, k})^\mathsf{H}+\sigma^2_z\bm{I}+\bm{Q}_{\mathrm{UL}}\right|}{\left|\bm{Q}_{\mathrm{UL}}\right|} \leq C, \\
&\bm{Q}_{\mathrm{UL}}(i,i) > 0, \bm{Q}_{\mathrm{UL}}(i,j) = 0,\quad \forall i \neq j,
\end{aligned}
\end{equation}
where $\eta = P_{\mathrm{UL}} \min_k |(\bm{m})^\mathsf{H}\bm{h}_{\mathrm{UL},k}|^2$.

We propose to adopt the alternating optimization technique to solve the uplink optimization problem \eqref{prob: uplink optimization1}.
Specifically, we first fix the covariance matrix $\bm{Q}_{\mathrm{UL}}$ in problem \eqref{prob: uplink optimization1} and optimize the receive beamforming vector $\bm{m}$ by solving the following min-max problem:
\begin{equation}\label{prob: uplink optimization2}
\begin{aligned}
\qquad\min_{ \bm{m}} \max_{k} \quad \frac{\bm{m}^\mathsf{H}\tilde{\bm{Q}}\bm{m}}{{|\bm{m}^\mathsf{H}\bm{h}_{\mathrm{ul,k}}|^2}},\quad\text{s.t.}\quad\tilde{\bm{Q}} = \sigma^2_z\bm{I}+\bm{Q}_\mathrm{UL},
\end{aligned}
\end{equation}
which is equivalent to
\begin{equation}\label{prob: uplink optimization3}
\min_{\bm{m}} \max_{k} \quad -|\bm{m}^\mathsf{H}\bm{h}_{\mathrm{ul,k}}|^2, \quad
\mathrm{s.t.} \quad \bm{m}^\mathsf{H}\tilde{\bm{Q}}\bm{m} = 1.
\end{equation}

Problem \eqref{prob: uplink optimization3} is a quadratically constrained quadratic programming (QCQP) problem with a non-convex constraint. 
To tackle the non-convex constraint in problem \eqref{prob: uplink optimization3}, we relax the equality constraint to an inequality constraint, i.e., $\bm{m}^\mathsf{H}\tilde{\bm{Q}}\bm{m} \leq 1$, which yields the following relaxed problem:
\begin{equation}\label{prob: uplink optimization4}
\min_{\bm{m}} \max_{k} \quad -|\bm{m}^\mathsf{H}\bm{h}_{\mathrm{ul,k}}|^2, \quad
\mathrm{s.t.} \quad \bm{m}^\mathsf{H}\tilde{\bm{Q}}\bm{m} \leq 1.
\end{equation}
We adopt the successive convex approximation (SCA) technique to solve problem \eqref{prob: uplink optimization4} with good performance and low computational complexity.

To develop the SCA approach, we convert problem \eqref{prob: uplink optimization4} from the complex domain to the real domain with the following variables:
\begin{subequations}
\begin{align}
	\tilde{\bm{m}} &= \left[\Re(\bm{m})^\mathsf{T} \Im(\bm{m})^\mathsf{T} \right]^\mathsf{T}, \\
	\tilde{\bm{H}}_k &= -\begin{bmatrix}
		\Re(\bm{h}_{\mathrm{UL},k}\bm{h}^\mathsf{H}_{\mathrm{UL},k}) &-\Im(\bm{h}_{\mathrm{UL},k}\bm{h}^\mathsf{H}_{\mathrm{UL},k}) \\
		\Im(\bm{h}_{\mathrm{UL},k}\bm{h}^\mathsf{H}_{\mathrm{UL},k}) &\Re(\bm{h}_{\mathrm{UL},k}\bm{h}^\mathsf{H}_{\mathrm{UL},k})
	\end{bmatrix}, \forall k \in [K], \\
	\tilde{\bm{Q}'} &= \begin{bmatrix}
		\Re(\tilde{\bm{Q}}) &-\Im(\tilde{\bm{Q}}) \\
		\Im(\tilde{\bm{Q}}) &\Re(\tilde{\bm{Q}})
	\end{bmatrix}.
\end{align}
\end{subequations}
Then, problem \eqref{prob: uplink optimization4} can be reformulated as follows:
\begin{equation}\label{prob: uplink optimization5}
\min_{\tilde{\bm{m}}} \max_{k} \quad \tilde{\bm{m}}^\mathsf{T}\tilde{\bm{H}}_k\tilde{\bm{m}}, \quad
\mathrm{s.t.} \quad \tilde{\bm{m}}^\mathsf{T}\tilde{\bm{Q}'}\tilde{\bm{m}} \leq 1.
\end{equation}

For simplicity, we define $u_k(\tilde{\bm{m}}) = \tilde{\bm{m}}^\mathsf{T}\tilde{\bm{H}}_k\tilde{\bm{m}}$.
Since $u_k(\tilde{\bm{m}})$ is a concave function, we can upper bound $u_k(\tilde{\bm{m}})$ by the tangent of the $l$-th point $\bm{m}_{(l)}$ as follows:
\begin{align*}
	u_k(\tilde{\bm{m}}) &\leq \nabla u_k(\tilde{\bm{m}}_{(l)})^\mathsf{T}(\tilde{\bm{m}} - \tilde{\bm{m}}_{(l)}) + u_k(\tilde{\bm{m}}_{(l)}) \\
	&= \left(2\tilde{\bm{H}}_k\tilde{\bm{m}}_{(l)}\right)^\mathsf{T}\tilde{\bm{m}} - (\tilde{\bm{m}}_{(l)})^\mathsf{T}\tilde{\bm{H}}_k\tilde{\bm{m}}_{(l)}.
\end{align*}
The sub-problem for each SCA iteration can be reformulated as follows:
\begin{equation}\label{prob: uplink optimization6}
\begin{aligned}
\min_{\tilde{\bm{m}}} \max_{k}
	&\quad  \left(2\tilde{\bm{H}}_k\tilde{\bm{m}}_{(l)}\right)^\mathsf{T}\tilde{\bm{m}} - (\tilde{\bm{m}}_{(l)})^\mathsf{T}\tilde{\bm{H}}_k\tilde{\bm{m}}_{(l)} \\
\mathrm{s.t.}
	& \quad \tilde{\bm{m}}^\mathsf{T}\tilde{\bm{Q}'}\tilde{\bm{m}} \leq 1.
\end{aligned}
\end{equation}
Starting from an initial point $\tilde{\bm{m}}_{(0)}$, a sequence of solutions $\{\tilde{\bm{m}}_{(l)}\}$ is generated by solving a series of the above sub-problems.

When fixing the receive beamforming vector $\bm{m}$, problem \eqref{prob: uplink optimization1} is reduced to the following problem:
\begin{equation}\label{prob: uplink optimization7}
\begin{aligned}
\min_{\bm{Q}_{\mathrm{UL}}}& \quad \bm{m}^\mathsf{H}\left(\sigma^2_z\bm{I}+\bm{Q}_\mathrm{UL}\right)\bm{m} \\
\mathrm{s.t.}
& \quad \log\frac{\left|P_{\mathrm{UL}}\sum_{k=1}^{K} \bm{h}_{\mathrm{UL}, k} (\bm{h}_{\mathrm{UL}, k})^\mathsf{H}+\sigma^2_z\bm{I}+\bm{Q}_{\mathrm{UL}}\right|}{\left|\bm{Q}_{\mathrm{UL}}\right|} \leq C, \\
& \quad \bm{Q}_{\mathrm{UL}}(i,i) > 0,  \bm{Q}_{\mathrm{UL}}(i,j) = 0,\quad \forall i \neq j.
\end{aligned}
\end{equation}
It is easy to verify that problem \eqref{prob: uplink optimization7} is convex with respect to $\bm{Q}_\mathrm{UL}$, which can be solved efficiently with polynomial complexity by using CVX, a package for specifying and solving convex programs \cite{cvx}.

The proposed algorithm for uplink optimization is summarized in Algorithm \ref{algorithm 2}.
The convergence results are summarized as follows:
\begin{itemize}
	\item In the inner loop (Steps 4-7), the values of objective function in problem \eqref{prob: uplink optimization5} achieved by sequence $\{\tilde{\bm{m}}^{(i)}_{(l)}\}_{l=0}^\infty$ converge monotonically.
	\item In the outer loop (Steps 2-11), the values of objective function in problem \eqref{prob: uplink optimization3} achieved by sequence $\{\tilde{\bm{m}}^{(i)}_{(0)}, \bm{Q}_\mathrm{UL}^{(i)}\}_{i=0}^\infty$ converge monotonically.
\end{itemize}
Problem \eqref{prob: uplink optimization6} can be equivalently formulated as a convex QCQP problem and then be solved by using the interior-point method, and the computational complexity is $\mathcal{O}((MN+K)^{3.5})$.
Similarly, the complexity of solving problem \eqref{prob: uplink optimization7} is $\mathcal{O}((MN)^{3.5})$.

\begin{algorithm}[tbp]
\setstretch{0.95}
\caption{Proposed Algorithm for Uplink Optimization \eqref{prob: uplink optimization1}} 
\label{algorithm 2}
\SetAlgoLined
\SetKwInOut{Input}{Input}
\SetKwInOut{Output}{Output}
\Input{Initial points $\tilde{\bm{m}}^{(0)}_{(0)}, \bm{Q}^{(0)}_{\mathrm{UL}}$ and threshold $\epsilon$\;}
Set $i = 0$\;
\Repeat{$\left\|\tilde{\bm{m}}^{(i+1)}_{(0)} - \tilde{\bm{m}}^{(i)}_{(0)}\right\| + \left\|\bm{Q}_\mathrm{UL}^{(i+1)} - \bm{Q}_\mathrm{UL}^{(i)} \right\|< \epsilon$}{
	Set $l = 0$\;
	\Repeat{$\left\|\tilde{\bm{m}}^{(i)}_{(l+1)} - \tilde{\bm{m}}^{(i)}_{(l)}\right\| < \epsilon$}{
		Update $\tilde{\bm{m}}^{(i)}_{(l+1)}$ by solving problem \eqref{prob: uplink optimization6}\;
		$l \leftarrow l + 1$\;
    }
	Set $\tilde{\bm{m}}^{(i+1)}_{(0)} \leftarrow \tilde{\bm{m}}^{(i)}_{(l)}$\;
	Update $\bm{Q}_{\mathrm{UL}}^{(i+1)}$ by solving problem \eqref{prob: uplink optimization7}\;
	Set $i \leftarrow i + 1$\;
}
\end{algorithm}

\subsubsection{Downlink Optimization}
By substituting \eqref{eq: downlink noise} into problem \eqref{prob: downlink optimization}, we solve the following problem:
\begin{equation}\label{prob: downlink optimization1}
	\begin{aligned}
\min_{ \bm{u}, \bm{Q}_{\mathrm{DL}}  }& \quad \sum_{k=1}^K  \frac{\Phi_{2,k}}{|(\bm{h}^{(t)}_{\mathrm{DL}, k})^\mathsf{H} \bm{u}|^2}\left(\sigma_{z}^2+(\bm{h}_{\mathrm{DL}, k})^\mathsf{H} \bm{Q}_\mathrm{DL} \bm{h}_{\mathrm{DL}, k}\right) \\
\text{s.t.}\; 
& \quad \log \frac{\left|\bm{u}\bm{u}^\mathsf{H} + \bm{Q}_{\mathrm{DL}}\right|}{\left|\bm{Q}_{\mathrm{DL}}\right|} \leq C, \\
&\quad \bm{Q}_{\mathrm{DL}}(i,i) > 0, \bm{Q}_{\mathrm{DL}}(i,j) = 0, \forall i \neq j,\\
&\quad \|\bm{u}\|^2+ \trace(\bm{Q}_{\mathrm{DL}}) \leq P_{\mathrm{DL}}.
\end{aligned}
\end{equation}

Although the objective function in problem \eqref{prob: downlink optimization1} is convex with respect to each optimization variable when other optimization variables are fixed, the capacity constraint is non-convex with respect to $\bm{u}^{(t)}$.
Hence, we approximate  problem \eqref{prob: downlink optimization1} to a convex optimization by linearizing the limited fronthaul capacity constraint.
By solving the approximated optimization problem, we successively approximate the optimal solution of the original problem.

Before reformulating the downlink optimization problem, we first present the following lemma for convex approximation, which is also adopted in \cite{DBLP:journals/jsac/ZhouY14}.
\begin{lemma}
	For positive definite Hermitian matrices $\bm{\Omega}, \bm{\Sigma} \in \mathbb{C}^{N\times N}$, we have
	\begin{align}
		\log |\bm{\Omega}| \leq \log |\bm{\Sigma}| + \trace\left(\bm{\Sigma}^{-1}\bm{\Omega}\right) - N,
	\end{align}
	with equality if and only if $\bm{\Omega} = \bm{\Sigma}$.
\end{lemma}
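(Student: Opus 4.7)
The plan is to reduce the matrix inequality to a scalar inequality via simultaneous diagonalization. First I would introduce the congruence transform $\bm{A} := \bm{\Sigma}^{-1/2} \bm{\Omega} \bm{\Sigma}^{-1/2}$, which is well-defined since $\bm{\Sigma}$ is positive definite, and $\bm{A}$ is itself positive definite Hermitian because $\bm{\Omega}$ is. Let $\lambda_1, \ldots, \lambda_N > 0$ denote its eigenvalues.

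Next I would rewrite both sides of the claimed inequality in terms of these eigenvalues. Using $|\bm{\Omega}| = |\bm{\Sigma}| \cdot |\bm{A}|$ gives $\log|\bm{\Omega}| - \log|\bm{\Sigma}| = \sum_{i=1}^N \log \lambda_i$, and the cyclic property of the trace yields
\begin{align*}
\trace(\bm{\Sigma}^{-1} \bm{\Omega}) = \trace(\bm{\Sigma}^{-1/2} \bm{\Omega} \bm{\Sigma}^{-1/2}) = \trace(\bm{A}) = \sum_{i=1}^N \lambda_i.
\end{align*}
The claim therefore reduces to the purely scalar statement $\sum_{i=1}^N \log \lambda_i \le \sum_{i=1}^N (\lambda_i - 1)$.

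The core step is the elementary inequality $\log x \le x - 1$ for all $x > 0$, with equality if and only if $x = 1$. This is immediate from concavity of $\log$ (the tangent line at $x=1$), or from differentiating $x - 1 - \log x$ to see it attains its unique minimum of $0$ at $x=1$. Summing this over $i = 1, \ldots, N$ yields the desired inequality.

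Finally, for the equality condition, strict inequality $\log x < x - 1$ for $x \ne 1$ forces $\lambda_i = 1$ for every $i$ when equality holds. This means $\bm{A} = \bm{I}$, and hence $\bm{\Omega} = \bm{\Sigma}^{1/2} \bm{A} \bm{\Sigma}^{1/2} = \bm{\Sigma}$; the converse is a one-line verification. I do not expect any real obstacle here — the only subtlety is simply recognizing the reduction to $\log x \le x - 1$ via the congruence transform, after which the argument is routine.
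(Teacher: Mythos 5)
Your proof is correct: the congruence reduction to $\bm{A} = \bm{\Sigma}^{-1/2}\bm{\Omega}\bm{\Sigma}^{-1/2}$, the eigenvalue rewriting of both $\log|\bm{\Omega}|-\log|\bm{\Sigma}|$ and $\trace(\bm{\Sigma}^{-1}\bm{\Omega})$, and the scalar inequality $\log x \le x-1$ (with equality iff $x=1$) together give exactly the stated bound and its equality condition. The paper itself offers no proof of this lemma — it only cites an external reference — so there is nothing to compare against; your argument is the standard one and is complete. The only minor point worth being explicit about is that the lemma requires the natural logarithm for the constant $-N$ to be tight, which your use of $\log x \le x - 1$ implicitly assumes.
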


By applying Lemma 1 to the capacity constraint in problem \eqref{prob: downlink optimization1} and setting $\bm{\Omega} = \bm{u}\bm{u}^\mathsf{H} + \bm{Q}_{\mathrm{DL}}$, we can approximate the capacity constraint in problem \eqref{prob: downlink optimization1} with the following convex constraint:
\begin{align}\label{eq: convex constraint}
	\log|\bm{\Sigma}| + \trace\left(\bm{\Sigma}^{-1}\left(\bm{u}\bm{u}^\mathsf{H} + \bm{Q}_{\mathrm{DL}}\right)\right)  - \log|\bm{Q}_\mathrm{DL}|\leq C + MN.
\end{align}
Note that the capacity constraint in problem \eqref{prob: downlink optimization1} is always feasible when the convex constraint \eqref{eq: convex constraint} is feasible.
The two constraints are equivalent when 
\begin{align}\label{eq: optimal sigma}
	\bm{\Sigma}^* = \bm{u}\bm{u}^\mathsf{H} + \bm{Q}_{\mathrm{DL}}.
\end{align}

By using the convex constraint \eqref{eq: convex constraint} to replace the capacity constraint in \eqref{prob: downlink optimization1}, we reformulate the downlink optimization problem to the following equivalent problem:
\begin{equation}\label{prob: downlink optimization2}
	\begin{aligned}
\min_{\bm{u}, \bm{Q}_{\mathrm{DL}}, \bm{\Sigma}}& \quad \trace \left(\bm{Q}_\mathrm{DL}\sum_{k=1}^K  \frac{\Phi_{2,k}}{\bm{u}^\mathsf{H}\bm{H}_{\mathrm{DL},k}\bm{u}}\bm{H}_{\mathrm{DL},k}\right) \\
\text{s.t.}\;\;
&\quad \log|\bm{\Sigma}| + \trace\left(\bm{\Sigma}^{-1}\bm{\Omega}\right)- \log|\bm{Q}_\mathrm{DL}|\leq C + MN, \\
&\quad \bm{Q}_{\mathrm{DL}}(i,i) > 0,  \bm{Q}_{\mathrm{DL}}(i,j) = 0, \forall i \neq j,\\
&\quad \|\bm{u}\|^2+ \trace(\bm{Q}_{\mathrm{DL}}) \leq P_{\mathrm{DL}},
\end{aligned}
\end{equation}
where $\bm{\Omega} = \bm{u}\bm{u}^\mathsf{H} + \bm{Q}_{\mathrm{DL}}$ and $\bm{H}_{\mathrm{DL},k} = \bm{h}_{\mathrm{DL},k} \bm{h}_{\mathrm{DL},k}^\mathsf{H}$.

In this paper, we propose to alternatively optimize problem \eqref{prob: downlink optimization2}.
When $\{\bm{u}, \bm{Q}_{\mathrm{DL}}\}$ are fixed, the optimal value of $\bm{\Sigma}$ is given by \eqref{eq: optimal sigma}.
When $\bm{\Sigma}$ is fixed, the solutions of $\{\bm{u}, \bm{Q}_{\mathrm{DL}}\}$ are given by solving the following problem:
\begin{equation}\label{prob: downlink optimization3}
	\begin{aligned}
\min_{\bm{u}, \bm{Q}_{\mathrm{DL}}}& \quad \trace \left(\bm{Q}_\mathrm{DL}\sum_{k=1}^K  \frac{\Phi_{2,k}}{\bm{u}^\mathsf{H}\bm{H}_{\mathrm{DL},k}\bm{u}}\bm{H}_{\mathrm{DL},k}\right) \\
\text{s.t.}\; 
&\quad \trace\left(\bm{\Sigma}^{-1}\bm{\Omega}\right)- \log|\bm{Q}_\mathrm{DL}|\leq C + MN - \log|\bm{\Sigma}|, \\
&\quad \bm{Q}_{\mathrm{DL}}(i,i) > 0, \bm{Q}_{\mathrm{DL}}(i,j) = 0, \forall i \neq j,\\
&\quad \|\bm{u}\|^2+ \trace(\bm{Q}_{\mathrm{DL}}) \leq P_{\mathrm{DL}}.
\end{aligned}
\end{equation}

The objective function in problem \eqref{prob: downlink optimization3} is a bi-convex problem for $\{\bm{u}, \bm{Q}_{\mathrm{DL}}\}$, e.g., by fixing $\bm{u}$ the function is convex for $\bm{Q}_\mathrm{DL}$, and by fixing $\bm{Q}_\mathrm{DL}$ the function is convex for $\bm{u}$.
By exploiting the bi-convex structure of the problem, we adopt the ACS approach to solve problem \eqref{prob: downlink optimization3}. 
Specifically, problem \eqref{prob: downlink optimization3} is transformed into the following two sub-problems:
\begin{equation}\label{prob: downlink optimization4}
	\begin{aligned}
\min_{\bm{Q}_{\mathrm{DL}}}& \quad \trace \left(\bm{Q}_\mathrm{DL}\sum_{k=1}^K  \frac{\Phi_{2,k}}{\bm{u}^\mathsf{H}\bm{H}_{\mathrm{DL},k}\bm{u}}\bm{H}_{\mathrm{DL},k}\right) \\
\text{s.t.}
&\quad \trace\left(\bm{\Sigma}^{-1}\bm{\Omega}\right)- \log|\bm{Q}_\mathrm{DL}|\leq C + MN - \log|\bm{\Sigma}|, \\
&\quad \bm{Q}_{\mathrm{DL}}(i,i) > 0, \bm{Q}_{\mathrm{DL}}(i,j) = 0, \forall i \neq j,\\
&\quad \|\bm{u}\|^2+ \trace(\bm{Q}_{\mathrm{DL}}) \leq P_{\mathrm{DL}},
\end{aligned}
\end{equation}
and 
\begin{equation}\label{prob: downlink optimization5}
	\begin{aligned}
\min_{\bm{u}}& \quad \trace \left(\bm{Q}_\mathrm{DL}\sum_{k=1}^K  \frac{\Phi_{2,k}}{\bm{u}^\mathsf{H}\bm{H}_{\mathrm{DL},k}\bm{u}}\bm{H}_{\mathrm{DL},k}\right) \\
\text{s.t.}
&\quad \|\bm{u}\|^2+ \trace(\bm{Q}_{\mathrm{DL}}) \leq P_{\mathrm{DL}}.
\end{aligned}
\end{equation}
The overall algorithm for downlink optimization is summarized in Algorithm \ref{algorithm 3}.
The sub-problems \eqref{prob: downlink optimization4} and \eqref{prob: downlink optimization5} are convex and can be solved by using the interior-point method within $\mathcal{O}((MN)^{3.5})$ iterations.
The convergence of ACS has already been intensively studied in \cite{DBLP:journals/mmor/GorskiPK07}.


\begin{algorithm}[tbp]
\setstretch{0.95}
\caption{Proposed Algorithm for Downlink Optimization \eqref{prob: downlink optimization1}}
\label{algorithm 3}
\SetAlgoLined
\SetKwInOut{Input}{Input}
\SetKwInOut{Output}{Output}
\Input{Initial points $\bm{\Sigma}^{(0)}$ and threshold $\epsilon$\;}
Set $i = 0$\;
\Repeat{$\left\|\bm{u}^{(i+1)} - \bm{u}^{(i)}\right\| + \left\|\bm{Q}_{\mathrm{DL}}^{(i+1)} - \bm{Q}_{\mathrm{DL}}^{(i)} \right\| + \left\|\bm{\Sigma}^{(i+1)} - \bm{\Sigma}^{(i)}\right\|< \epsilon$}{
	Update $\bm{Q}_{\mathrm{DL}}^{(i+1)}$ by solving problem \eqref{prob: downlink optimization4}\;
	Update $\bm{u}^{(i+1)}$ by solving problem \eqref{prob: downlink optimization5}\;
	Update $\bm{\Sigma}^{(i+1)}$ by using \eqref{eq: optimal sigma}\;
	Set $i \leftarrow i + 1$\;
}
\end{algorithm}

\section{Numerical Results}
In this section, we conduct extensive numerical experiments to evaluate the performance of the proposed system optimization algorithm for the AirComp based Cloud-RAN for wireless vertical FL.

\subsection{Simulation Settings}

\subsubsection{Vertical FL Setting}
We consider a vertical FL setting where $K=49$ devices cooperatively train a regularized logistic regression model.
We simulate the image classification task on Fashion-MNIST dataset.
The local training data are independent and identically distributed (i.i.d.) drawn from $L=50,000$ images, and the test dataset contains $10,000$ different images. 
For each training image of 784 features, each device is assigned with $784/49=16$ non-overlapped features.
The learning rate $\mu^{(t)}$ is set to 0.01.
We use the optimality gap to characterize the convergence of the proposed algorithm.
Furthermore, the learning performance of wireless vertical FL is evaluated by the training loss for training and the test accuracy for testing. 

\subsubsection{Communication Setting}
We consider a Cloud-RAN assisted vertical FL system with $N$ edge servers and $K=49$ devices randomly located in a circle area of radius $500$ m.
The edge servers are randomly distributed in the circular area.
Each edge server is equipped with $M$ antennas, and each device is equipped with single antenna.
The channel coefficients are given by the small-scale fading coefficients multiplied by the square root of the path loss, i.e., $30.6 + 36.7\log_{10}(d)$ dB \cite{access2010further}, where $d$ (in meter) is the distance between the device and the edge server. 
The small-scale fading coefficients follow the standard i.i.d. Gaussian distribution.
The transmit power constraint for all device is set to $P_{\mathrm{UL}} = 23$ dBm,
the power spectral density of the background noise at each edge server is assumed to be $-169$ dBm/Hz \cite{DBLP:journals/tsp/LiuZ15}, and the noise figure is $7$ dB.
All numerical results are averaged over $50$ trials.

\subsection{Convergence of the Proposed Algorithm}
\begin{figure}[h]
	\vspace{-0.5cm}
	\centering
	\includegraphics[scale=0.6]{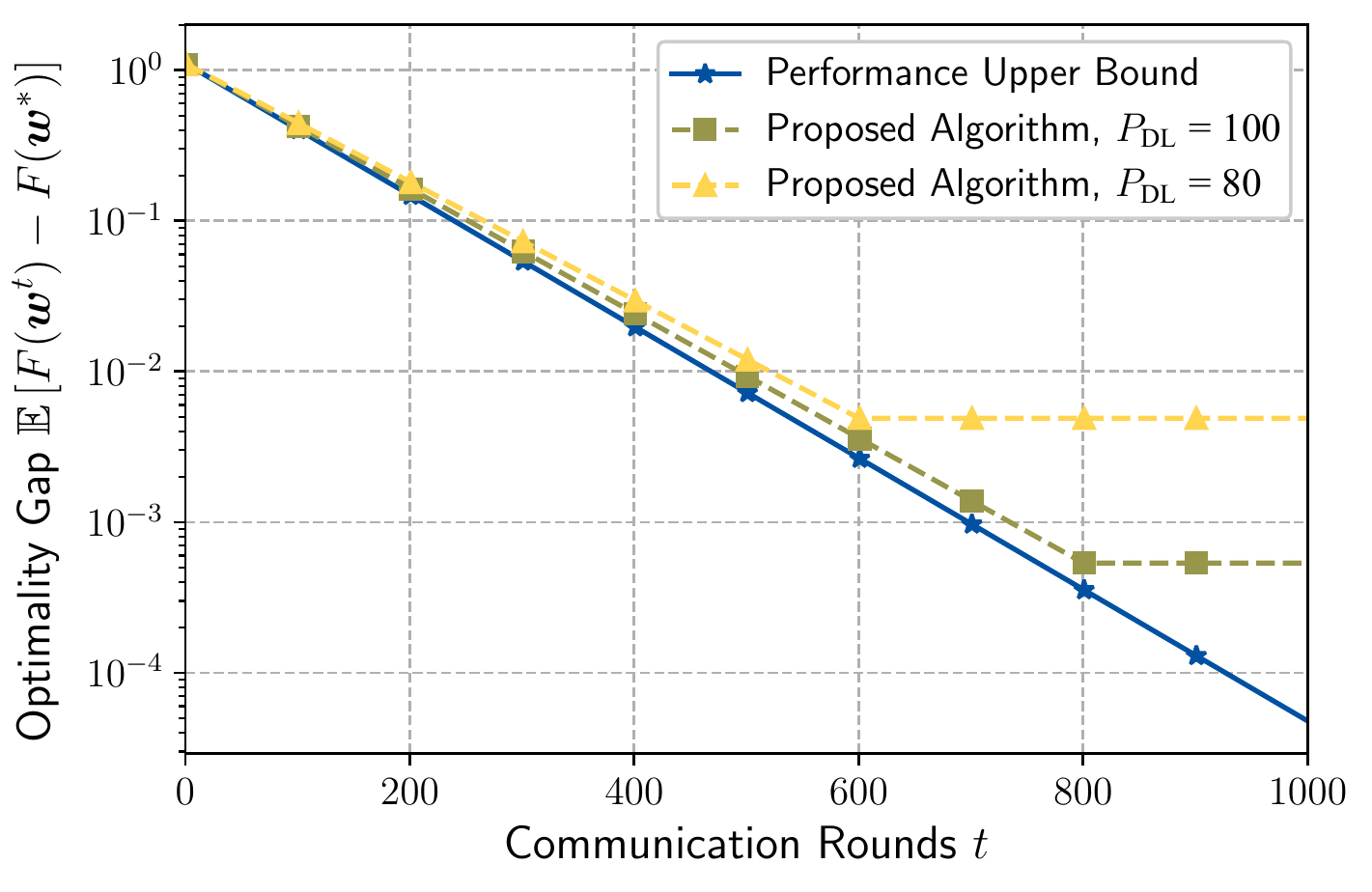}
	\vspace{-0.4cm}
	\caption{Optimality gap versus communication round for different downlink power constraints.}
	\label{fig1}
	\vspace{-0.6cm}
\end{figure}
In this subsection, we examine the convergence performance of the proposed algorithm for image classification task.
We consider Algorithm \ref{algorithm 1} as \textbf{Performance Upper Bound}, where the channels are noiseless (i.e., $\sigma^2_z$ = 0) and the fronthaul capacity is infinite (i.e., $C = +\infty$), which characterizes the best possible learning performance for vertical FL.

In Fig. \ref{fig1}, we plot the optimality gap in $1000$ communication rounds for different downlink power constraints with the capacity constraint $C = 200$ Mbps.
The number of edge servers and the number of antennas at each edge server are set to $N = 8$ and $M = 10$, respectively.
It is observed that the proposed algorithms under two considered downlink power constraints are both able to linearly converge.
This implies that the proposed algorithm can effectively compensate for the distortion of channels and noises, thereby speeding up the convergence and reducing optimality gap.
Furthermore, the case with higher downlink power constraint achieves better performance gains (smaller optimality gap), since the signal-to-noise ratio (SNR) in this case is higher.

\subsection{Impact of Key System Parameters}
In this subsection, we show the performance gain of joint optimization for wireless and fronthaul resource allocation, and investigate the impact of various key system parameters.
For clarity, we refer the proposed algorithm to jointly optimize beamforming vectors and covariance matrices as \textbf{Joint Optimization}.
We also consider different schemes for designing beamforming vectors and covariance matrices as baselines:
\begin{itemize}
	\item \textbf{Baseline 1: Uniform quantization with uniform beamforming.} The transmit beamforming $\bm{u}^{(t)}$ and the receive beamforming $\bm{m}^{(t)}$ are uniformly designed, i.e., $\bm{u}^{(t)} = \kappa\mathbf{1}$ and $\bm{m}^{(t)}=\sqrt{1/MN}\mathbf{1}$, where the scalar $\kappa$ is chosen to satisfy the maximum power constraint \eqref{eq: downlink power constraint}.
	The central server performs uniform quantization noise levels across the antennas for all edge servers, i.e., setting $\bm{Q}_\mathrm{UL}^{(t)} = \lambda_{\mathrm{UL}} \mathbf{I}$ and $\bm{Q}_\mathrm{DL}^{(t)} = \lambda_{\mathrm{DL}} \mathbf{I}$, where the scalars $\lambda_{\mathrm{UL}}$ and $\lambda_{\mathrm{DL}}$ are chosen to satisfy the fronthaul capacity constraints \eqref{eq: uplink capacity constraint} and \eqref{eq: downlink capacity constraint}.
	\item \textbf{Baseline 2: Uniform quantization with optimized beamforming.} This baseline is similar to Baseline 1 except that the receive beamforming vector $\{\bm{m}^{(t)}\}$ and the transmit beamforming vector $\{\bm{u}^{(t)}\}$ are optimized by Algorithms \ref{algorithm 2} and \ref{algorithm 3} with the uniformly designed covariance matrices $\bm{Q}_\mathrm{UL}^{(t)} = \lambda_{\mathrm{UL}} \mathbf{I}$ and $\bm{Q}_\mathrm{DL}^{(t)} = \lambda_{\mathrm{DL}} \mathbf{I}$.
	\item \textbf{Baseline 3: Optimized quantization with uniform beamforming.} This baseline is similar to Baseline 1 except that the covariance matrices $\{\bm{Q}^{(t)}_{\mathrm{UL}}, \bm{Q}^{(t)}_{\mathrm{DL}}\}$ are optimized by Algorithms \ref{algorithm 2} and \ref{algorithm 3} with the uniformly designed beamforming vectors $\bm{u}^{(t)} = \kappa\mathbf{1}$ and $\bm{m}^{(t)}=\sqrt{1/MN}\mathbf{1}$.
\end{itemize}

In order to record the performance of each scheme, each numerical experiment runs $500$ communication rounds, and the performance of the last round is recorded.
Unless specified otherwise, the number of edge servers is set to $8$, i.e., $N=8$, the number of antennas at each edge server is set to $2$, i.e., $M=2$, and the downlink power constraint is set to $P_{\mathrm{DL}}=100$ dBm.

Fig. \ref{fig: capacity} shows the training loss and test accuracy achieved by different schemes versus the fronthaul capacity $C$.
It is observed that the proposed joint optimization for wireless and fronthaul resource allocation achieves higher test accuracy and less training loss over Baselines 1-3, and approaches the performance upper bound when the fronthaul capacity is greater than or equal to 180 Mbps.
Furthermore, Fig. \ref{fig: capacity} shows that in the region of small the fronthaul capacity ($C < 200$ Mbps), Baseline 3 performs closer to the proposed joint optimization algorithm compared with Baselines 1 and 2 using uniform quantization.
We can therefore conclude that the performance gain of joint optimization is mainly obtained from the quantization optimization at edge servers.
On the other hand, as the fronthaul capacity increases, it is observed that Baseline 2 outperforms Baseline 3.
This is due to the impact of quantization error on the performance of vertical FL becomes negligible, and thus the scheme with optimized beamforming can achieve better performance.

\begin{figure}[h]
	\vspace{-0.5cm}
	\centering
	\begin{subfigure}[h]{0.49\textwidth}
	        \centering
	        \includegraphics[scale=0.6]{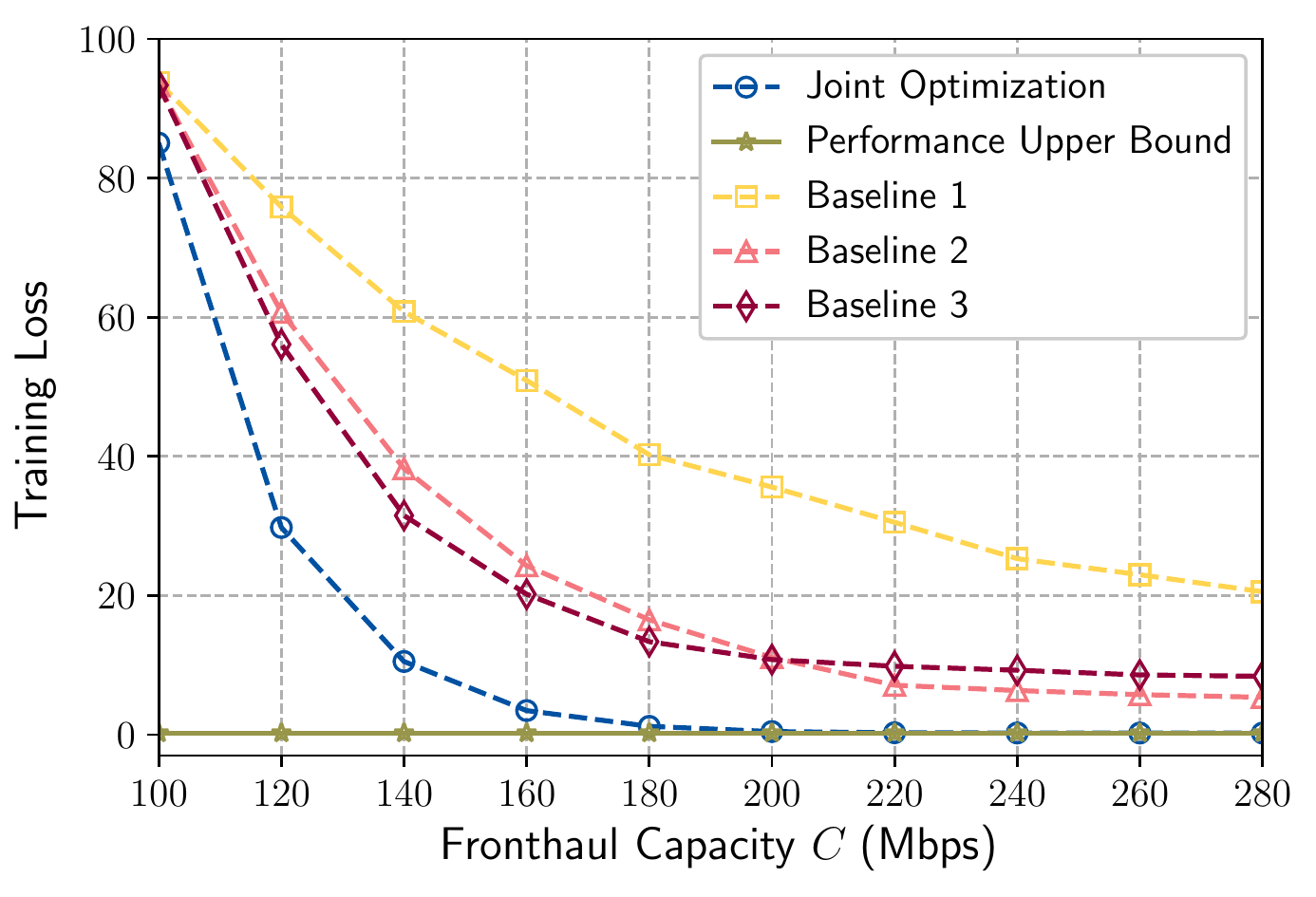}
	\end{subfigure}
	\begin{subfigure}[h]{0.49\textwidth}
	        \centering
	        \includegraphics[scale=0.6]{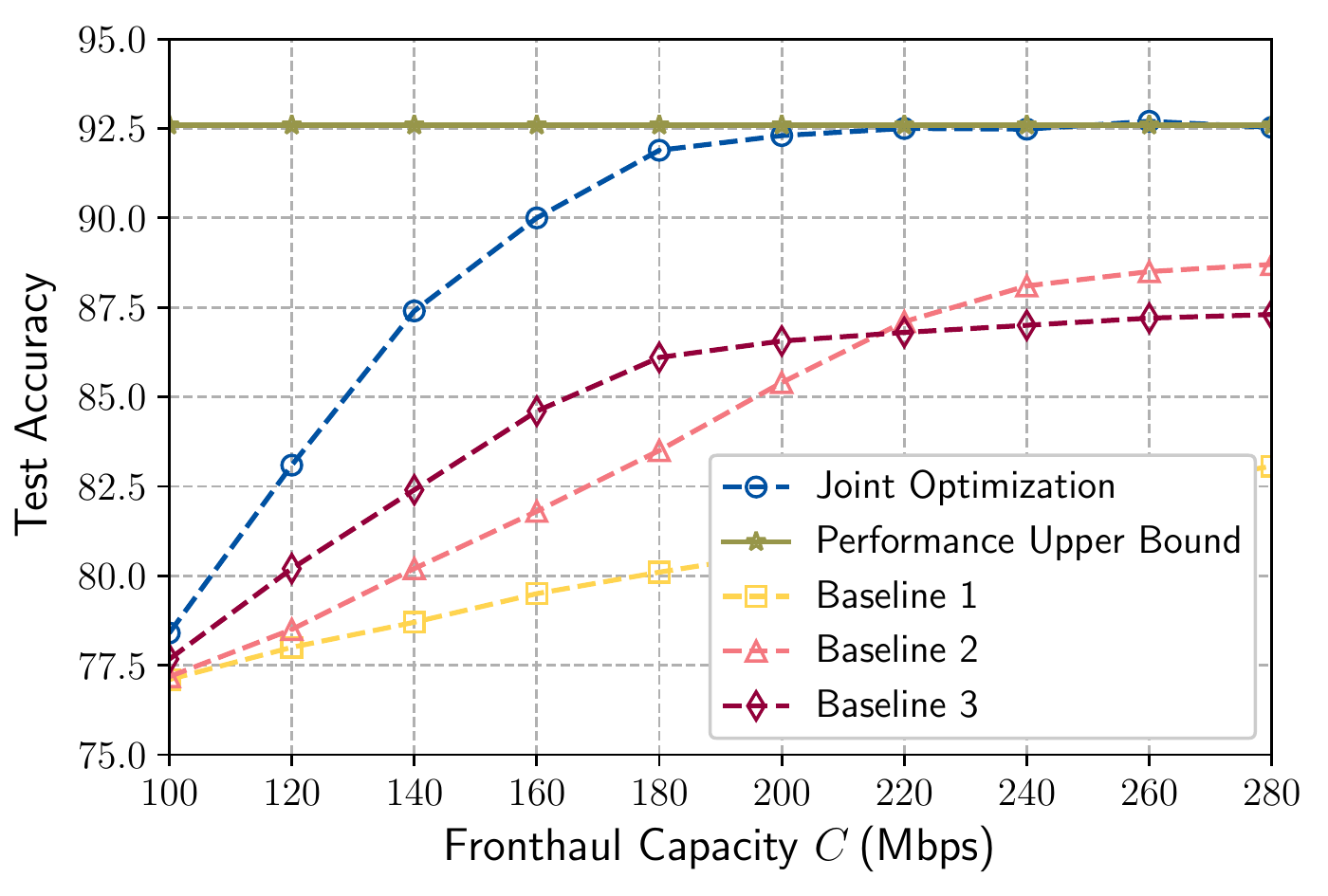}
	\end{subfigure}
	\vspace{-0.4cm}
	\caption{Training loss and test accuracy versus fronthaul capacity $C$ with $N=8$ and $M=2$.}
	\label{fig: capacity}
	\vspace{-0.8cm}
\end{figure}
	
Next, we investigate the impact of the antennas at each edge server on the performance of vertical FL, where the  fronthaul capacity is set to $C = 200$ Mbps.
Fig. \ref{fig: antennas} shows that the joint optimization for beamforming and quantization allocation still outperforms Baselines 1-3, and achieves the performance upper bound with a few antennas (i.e., $M = 10$).
In addition, we note that when $M$ is small (i.e., $M \leq 10$), Baseline 2 has better performance compared with Baseline 3, which is due to the performance gain of beamforming optimization dominates in this stage.
However, as the number of antennas $M$ increases, the performance gap between Baseline 2 and Baseline 3 vanishes.
The uniform quantization scheme used in Baseline 2 leads to severe quantization error since the capacity allocated to each antenna is very limited, which counteracts the performance gain of optimized beamforming.

\begin{figure}[h]
	\centering
	\begin{subfigure}[h]{0.49\textwidth}
	        \centering
	        \includegraphics[scale=0.6]{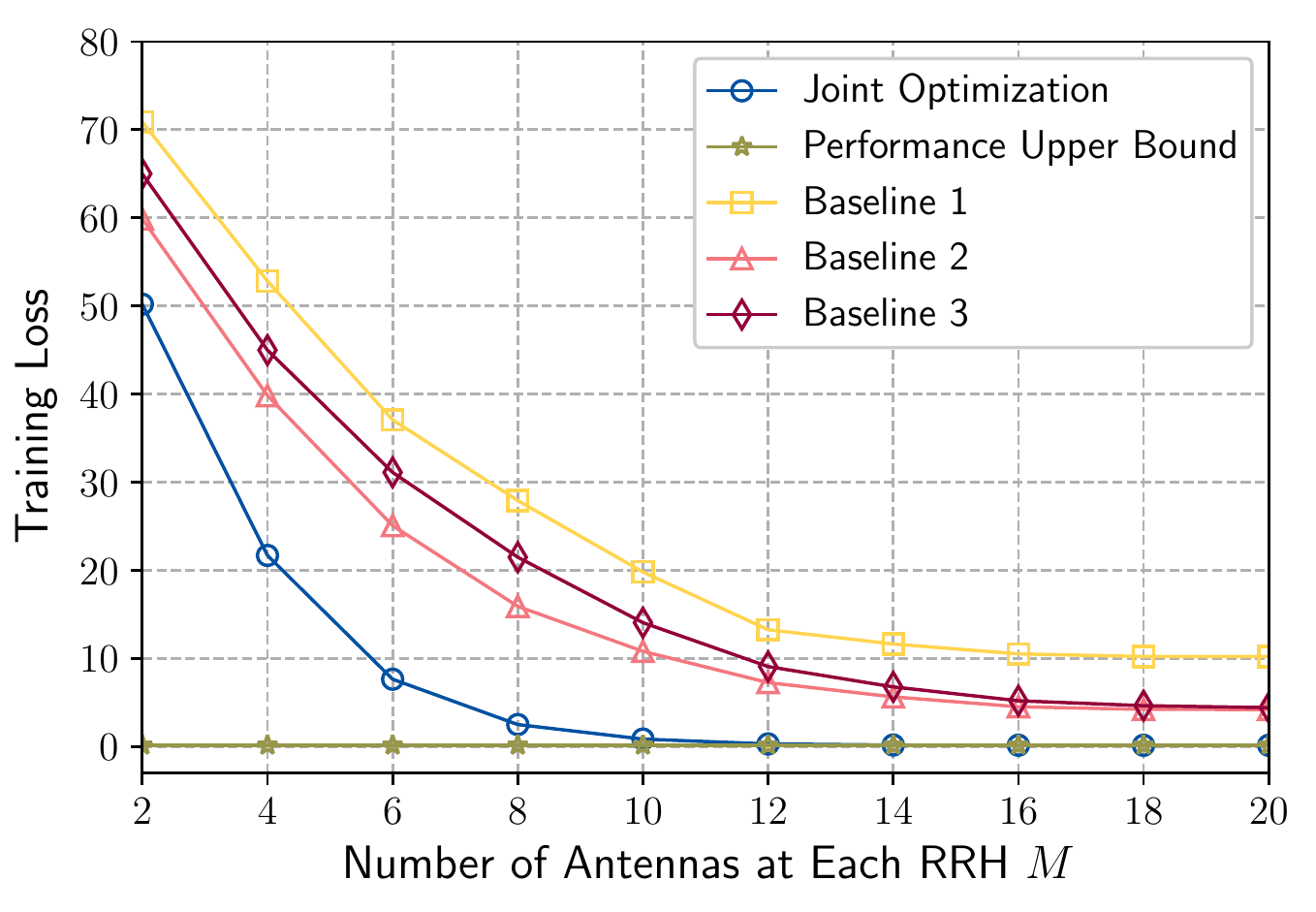}
	\end{subfigure}
	\begin{subfigure}[h]{0.49\textwidth}
	        \centering
	        \includegraphics[scale=0.6]{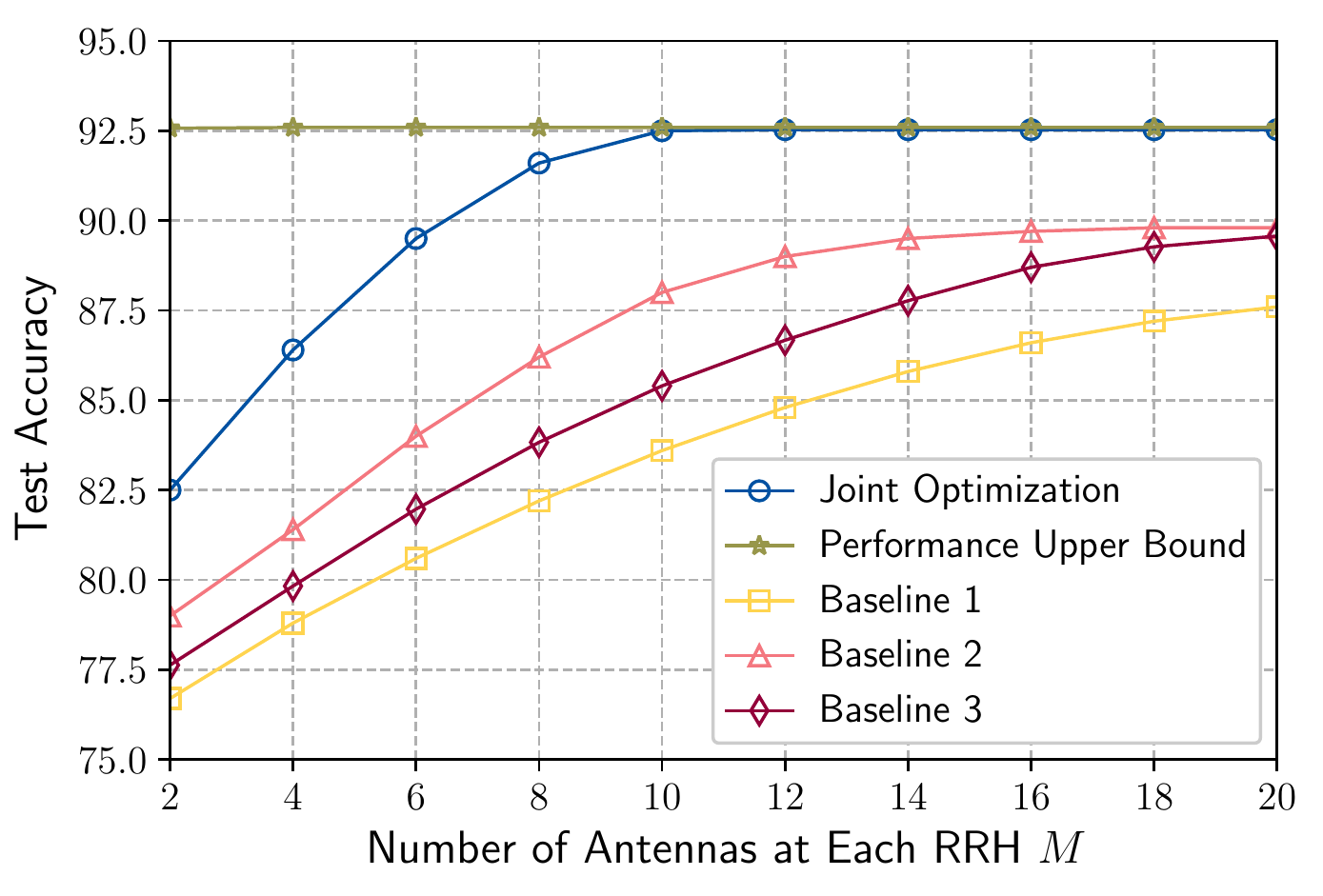}
	\end{subfigure}
	\vspace{-0.4cm}
	\caption{Training loss and test accuracy versus number of antennas of at each edge server $M$ with $N=8$ and $C=200$ Mbps.}
	\label{fig: antennas}
\end{figure}

\subsection{Cloud-RAN versus Massive MIMO}
In this subsection, we compare two promising techniques among the antenna configuration proposed for 5G wireless networks, i.e., Cloud-RAN and Massive MIMO.
The former is a distributed antenna system, while the latter is a centralized antenna system.
For comparison fairness, we assume that there are a total of $\tilde{M}$ antennas to serve $K$ devices.
In particular, for the massive MIMO network, we assume that there is only one BS equipped with all $\tilde{M}$ antennas, while for the Cloud-RAN network, we assume that there are $N$ edge servers each equipped with $\tilde{M}/N$ antennas.
In order to compare the two configurations, we eliminate the limited fronthaul constraint in the Cloud-RAN network by setting $C = +\infty$, which leads to zero quantization errors when aggregating.
The specific simulation settings are as follows: 
$K = 20$ devices are randomly located in a circular area of $800$ m. 
There are a total number of antennas $\tilde{M} = 40$. 
For the massive MIMO network, the BS is located in the center of the circular area;
while for the Cloud-RAN network, the edge servers are randomly located in the circular area.
Furthermore, the beamforming vectors $\{\bm{m}^{(t)}, \bm{u}^{(t)}\}$ for the considered networks are obtained by Algorithm \ref{algorithm 2} and Algorithm \ref{algorithm 3} with $C = +\infty$.
In order to record the performance of each network, each numerical experiment runs $500$ communication rounds, and the performance of the last round is recorded.

Fig. \ref{fig: rrhs} shows the learning performance comparison between Cloud-RAN and massive MIMO with different number of edge servers.
It is observed that when there is only one edge server, the performance of the Cloud-RAN network is slightly worse than that of the massive MIMO network.
However, when the number of edge servers $N \geq 2$, the performance over the Cloud-RAN network outperforms that over the massive MIMO network, which is due to that the gain of distributed antenna system dominates the performance of Cloud-RAN network.
Specifically, by deploying more edge servers in the network, each device can be served by the nearest edge server with strong channel conditions, which significantly mitigates communication straggler issues.
A larger number of RRHs results in better channel condition due to dense deployment of RRHs.
However, the capacity allocated on each server will be small, resulting in a large quantization error on the learning performance, vice versa.
This leads to a trade-off between the channel conditions and quantization error in multi-antenna Cloud-RAN, which allows an optimization of the number of RRHs for maximizing the learning performance.
As the total capacity increases, we should allocate more edge servers to exploit better channels for devices to improve performance.

\begin{figure}[h]
\vspace{-0.5cm}
\centering
\begin{subfigure}[h]{0.49\textwidth}
        \centering
        \includegraphics[scale=0.6]{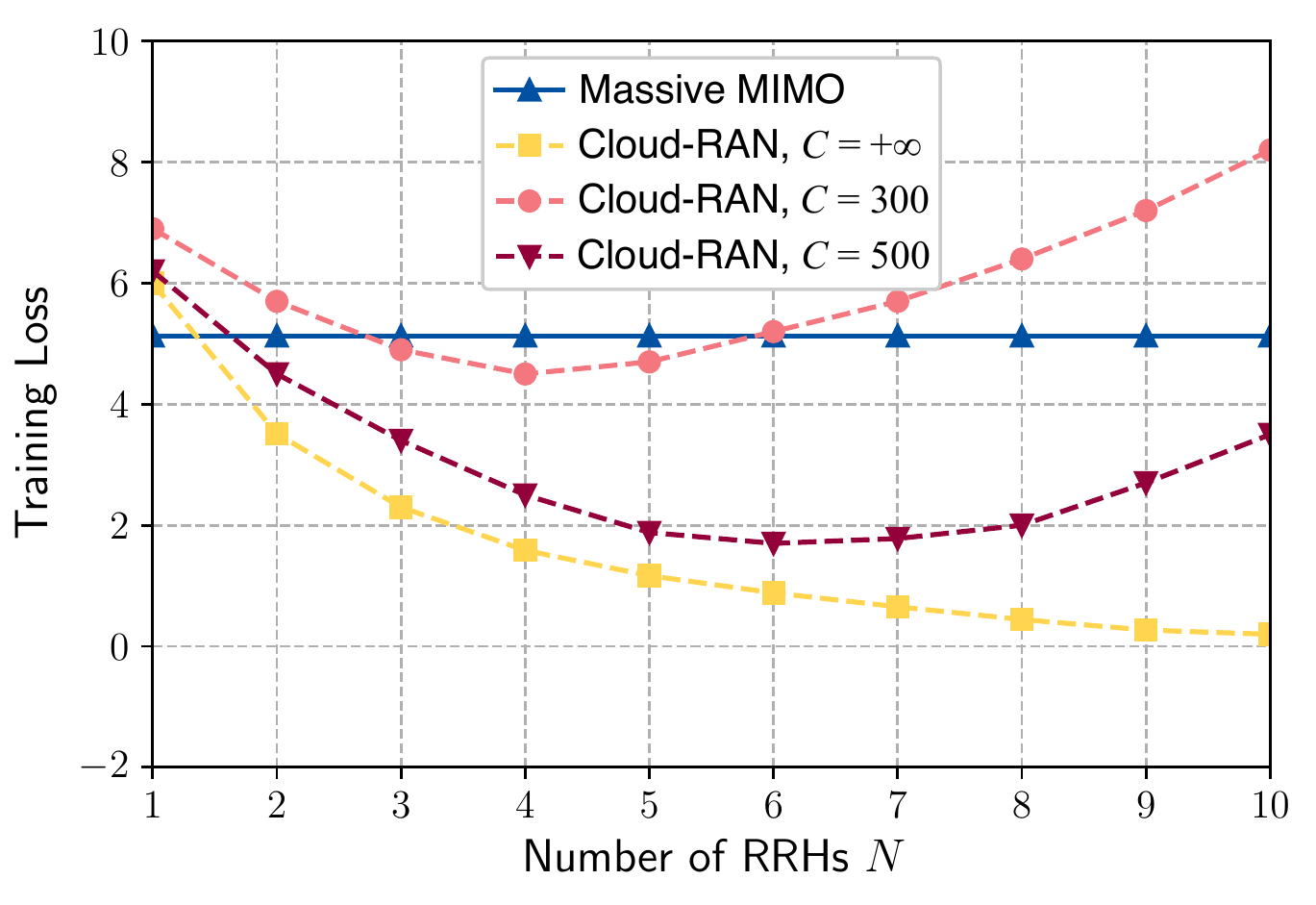}
\end{subfigure}
\begin{subfigure}[h]{0.49\textwidth}
        \centering
        \includegraphics[scale=0.6]{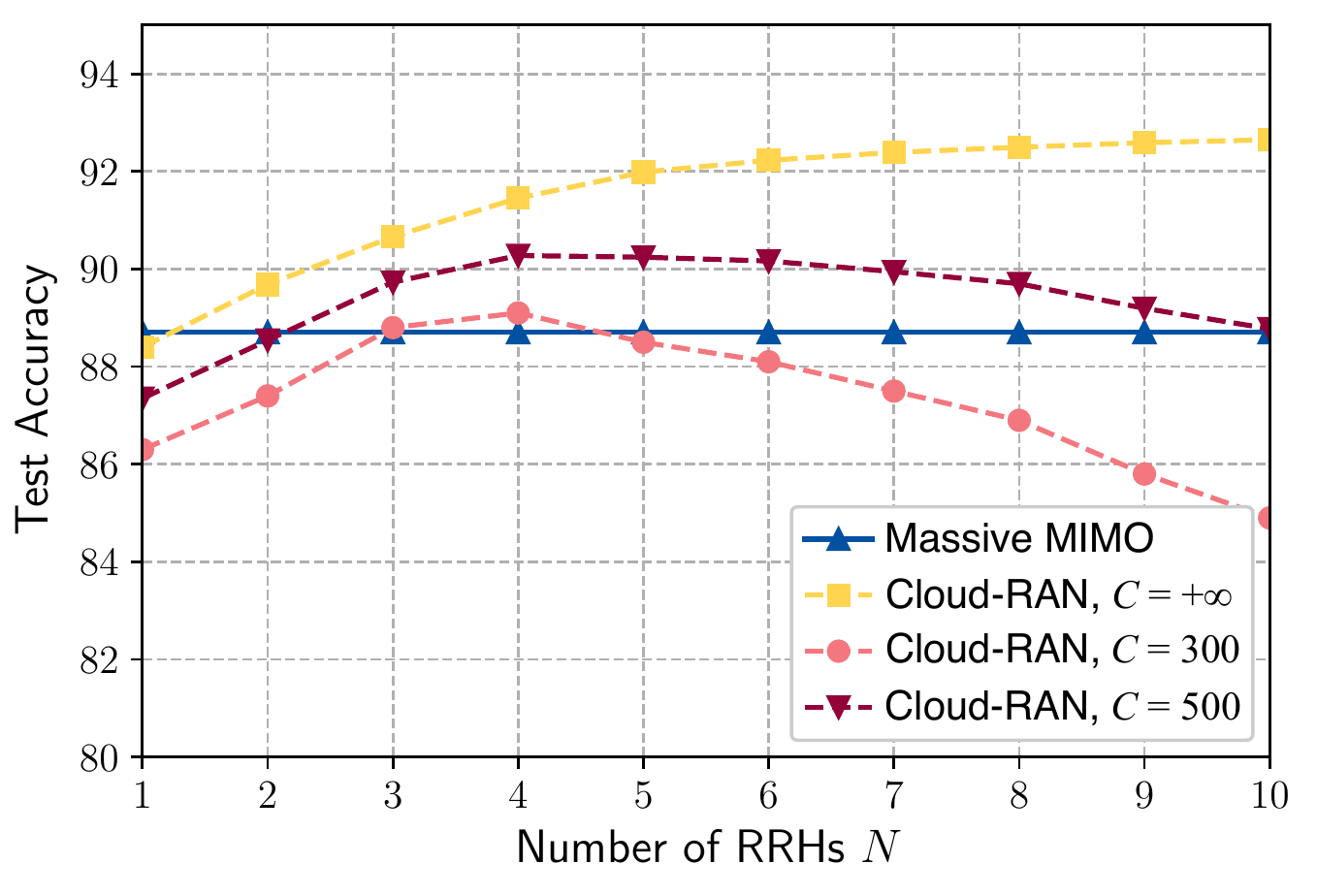}
\end{subfigure}
\vspace{-0.4cm}
\caption{Training loss and test accuracy versus number of RRHs $N$.}
\label{fig: rrhs}
\vspace{-1cm}
\end{figure}

\section{Conclusion}
In this paper, we proposed a novel framework based on AirComp and Cloud-RAN to support communication-efficient vertical FL over wireless networks. 
To reveal the impact of limited fronthaul capacity and AirComp aggregation error on the learning performance, we characterized the convergence behavior in terms of resource allocation. 
Based on the derived optimization gap for the wireless vertical FL algorithm, we further established a system optimization framework to minimize optimization gap for the wireless vertical FL algorithm under various system constraints. 
To solve this problem, we proposed to decompose the original system optimization problem into two sub-problems, i.e., uplink and downlink optimization problems, which can be efficiently solved by successive convex approximation and alternate convex search approaches. 
Extensive numerical results have shown that the proposed system optimization schemes can achieve high learning performance and the effectiveness of the proposed Cloud-RAN network architecture for vertical FL was also verified.

\appendices
\section{Proof of Theorem 1}\label{sec: proof of theorem 1}
According to \eqref{eq: partial gradient} and Lemma \ref{lemma}, the partial gradient estimate $\hat{\nabla}_k F(\bm{w}^{(t)})$ is given by
\begin{align}\label{eq: biased local gradient}
\hat{\nabla}_k F(\bm{w}^{(t)}) =& \frac{1}{L} \sum_{i=1}^L \hat{G}_i\left(\hat{s}^{(t)}(i)\right) \nabla g_{k,i}(\bm{w}_k)  + \lambda \nabla r_k(\bm{w})\nonumber \\
    =& \frac{1}{L} \sum_{i=1}^L \left(G_i\left(s^{(t)}(i)\right) + G'_i\left(s^{(t)}(i)\right)n^{(t)}_{\mathrm{UL}}(i) + n^{(t)}_{\mathrm{DL},k}(i)\right) \nabla g_{k,i}(\bm{w}_k) + \lambda \nabla r_k(\bm{w}) \nonumber \\
    =& \nabla_k F(\bm{w}^{(t)}) + \bm{e}^{(t)}_k,
\end{align}
where $\bm{e}^{(t)}_k$ denotes the effective communication noise for device $k$ in the $t$-th communication round that is given by
\begin{align}
	\bm{e}^{(t)}_k = \frac{1}{L} \sum_{i=1}^L \left(G'_i\left(s^{(t)}(i)\right)n^{(t)}_{\mathrm{UL}}(i) + n^{(t)}_{\mathrm{DL},k}(i)\right) \nabla g_{k,i}(\bm{w}_k).
\end{align}
In addition, the expected norm of $\bm{e}^{(t)}$ over the noises $n^{(t)}_{\mathrm{UL}}(i)$ and $n^{(t)}_{\mathrm{DL},k}(i)$ can be bounded as follows
\begin{align}
	\mathbb{E}[\|\bm{e}^{(t)}_k\|^2] =& \frac{1}{L^2}\mathbb{E}\left[\left\|\sum_{i=1}^L\left(G'_i\left(s^{(t)}(i)\right)n^{(t)}_{\mathrm{UL}}(i) + n^{(t)}_{\mathrm{DL},k}(i)\right) \nabla g_{k,i}(\bm{w}_k)\right\|^2\right] \nonumber \\
	\overset{(\mathrm{i})}{=}& \frac{1}{L^2}\sum_{i=1}^L\mathbb{E}\left[\left(G'_i\left(s^{(t)}(i)\right)n^{(t)}_{\mathrm{UL}}(i) + n^{(t)}_{\mathrm{DL},k}(i)\right)^2\right]\left\|\nabla g_{k,i}(\bm{w}_k)\right\|^2 \nonumber \\
	=& \frac{1}{L^2}(\Phi_{1,k}\sigma^2_{\mathrm{UL}}(t) + \Phi_{2,k}\sigma^2_{\mathrm{DL},k}(t)),
\end{align}
where
\begin{align}
	\Phi_{1,k} = \sum_{i=1}^LG'_i\left(s^{(t)}(i)\right)^2\left\|\nabla g_{k,i}(\bm{w}_k)\right\|^2, \quad 
	\Phi_{2,k} = \sum_{i=1}^L\left\|\nabla g_{k,i}(\bm{w}_k)\right\|^2.
\end{align}
Let $\bm{w}^{(t+1)} = [(\bm{w}^{(t+1)}_1)^\mathsf{T}, \ldots, (\bm{w}^{(t+1)}_k)^\mathsf{T}]^\mathsf{T}$, and use the definition of $\bm{w}^{(t+1)}_k$ to obtain
\begin{align}
\bm{w}^{(t+1)} = \left[\left(\bm{w}^{(t)}_1 - \mu^{(t)} \hat{\nabla}_1F(\bm{w}^{(t)})\right)^\mathsf{T}, \ldots, \left(\bm{w}^{(t)}_K - \mu^{(t)} \hat{\nabla}_k F(\bm{w}^{(t)})\right)^\mathsf{T}\right]^\mathsf{T} 
= \bm{w}^{(t)} - \mu^{(t)} \hat{\nabla} F(\bm{w}^{(t)}),
\end{align}
where 
\begin{align}
	\hat{\nabla} F(\bm{w}^{(t)}) = 
	\underbrace{\left[\left(\nabla_1 F(\bm{w}^{(t)})\right)^\mathsf{T}, \ldots, \left(\nabla_K F(\bm{w}^{(t)})\right)^\mathsf{T}\right]^\mathsf{T}}_{=\nabla F(\bm{w}^{(t)})} + \underbrace{\left[\left(\bm{e}^{(t)}_1\right)^\mathsf{T}, \ldots, \left(\bm{e}^{(t)}_K\right)^\mathsf{T}\right]^\mathsf{T}}_{=\bm{e}^{(t)}}.
\end{align}
Here, the equality (i) comes from \eqref{eq: biased local gradient}.

Under Assumption \ref{ass:2}, we have
\begin{align}\label{eq: smooth bound}
F(\bm{w}^{(t+1)}) &\leq F(\bm{w}^{(t)}) + \nabla F(\bm{w}^{(t)})^{\mathsf{T}} (\bm{w}^{(t+1)} - \bm{w}^{(t)})  + \frac{\beta}{2} \| \bm{w}^{(t+1)} - \bm{w}^{(t)} \|^2 \nonumber \\
&= F(\bm{w}^{(t)}) - \mu^{(t)} \nabla F(\bm{w}^{(t)})^{\mathsf{T}} \hat{\nabla} F(\bm{w}^{(t)})  +\frac{(\mu^{(t)})^2 \beta}{2} \left\| \hat{\nabla} F(\bm{w}^{(t)}) \right\|^2 \nonumber \\
&= F(\bm{w}^{(t)}) - \frac{1}{2\beta}  \left\| \nabla F(\bm{w}^{(t)}) \right\|^2 + \frac{1}{2\beta} \| \bm{e}^{(t)} \|^2,
\end{align}
where the learning rate $\mu^{(t)} = 1/\beta$ implies the last equality. 

In order to derive a lower bound for $\left\| \nabla F(\bm{w}^{(t)}) \right\|^2$, we use Assumption \ref{ass:1} and minimize both sides of \eqref{eq: convex} to obtain
\begin{align}
	F(\bm{w}^*) \geq F(\bm w) - \frac{1}{\alpha}\nabla F(\bm{w})^\mathsf{T} \nabla F(\bm{w}) + \frac{1}{2\alpha}\nabla F(\bm{w})^\mathsf{T} \nabla F(\bm{w}) = F(\bm{w}) - \frac{1}{2\alpha}\left\|\nabla F(\bm{w})\right\|^2.
\end{align}
Rearranging the terms, we have
\begin{align}\label{eq: lower bound of local gradient}
	\left\|\nabla F(\bm{w}^{(t)})\right\|^2 \geq 2\alpha\left(F(\bm{w}^{(t)}) - F(\bm{w}^*)\right).
\end{align}
Subtract $F(\bm{w}^*)$ from both sides and substitute \eqref{eq: lower bound of local gradient} into \eqref{eq: smooth bound} to get
\begin{align}
	F(\bm{w}^{(t+1)}) - F(\bm{w}^*)
	\leq&  F(\bm{w}^{(t)}) - F(\bm{w}^*) - \frac{\alpha}{\beta}  \left(F(\bm{w}^{(t)}) - F(\bm{w}^*)\right) + \frac{1}{2\beta} \| \bm{e}^{(t)} \|^2 \nonumber \\
	=& \left(1-\frac{\alpha}{\beta}\right)  \left(F(\bm{w}^{(t)}) - F(\bm{w}^*)\right) + \frac{1}{2\beta} \| \bm{e}^{(t)} \|^2
\end{align}
Applying this recursively, we obtain
\begin{align}\label{eq: original convergence}
	F(\bm{w}^{T}) - F(\bm{w}^*) 
	\leq& \rho^T\left(F(\bm{w}^0) - F(\bm{w}^*)\right) + \frac{1}{2\beta}\sum_{t=0}^{T-1}\rho^{T-t-1} \| \bm{e}^{(t)} \|^2,
\end{align}
where $\rho = 1 - \alpha / \beta$ is the contraction rate.

Taking expectations of \eqref{eq: original convergence} over the noises $n^{(t)}_{\mathrm{UL}}(i)$ and $\{n^{(t)}_{\mathrm{DL},k}(i)\}_{k=1}^K$, we obtain
\begin{align}
	\mathbb{E}\left[F(\bm{w}^{T}) - F(\bm{w}^*)\right]
	\leq& \rho^T\mathbb{E}\left[F(\bm{w}^0) - F(\bm{w}^*)\right]  + \frac{1}{2\beta}\sum_{t=0}^{T-1}\rho^{T-t-1} \mathbb{E}[\|\bm{e}^{(t)}\|^2] \nonumber \\
	\leq& \rho^T\mathbb{E}\left[F(\bm{w}^0) - F(\bm{w}^*)\right] + \frac{1}{2L^2\beta} B(T),
\end{align}
where the optimality gap $B(T)$ is given by
\begin{align}
	B(T) = \sum_{t=0}^{T-1}\rho^{T-t-1}\sum_{k=1}^K(\Phi_{1,k}\sigma^2_{\mathrm{UL}}(t) + \Phi_{2,k}\sigma^2_{\mathrm{DL},k}(t)),
\end{align}
which completes the proof of Theorem \ref{the1}.


\footnotesize
\bibliographystyle{IEEEtran}
\bibliography{refs.bib}

\end{document}